\declaretheorem[name=Theorem]{thm}
\declaretheorem[name=Lemma]{lem}
\title{On the representation of de Bruijn graphs$^*$}
\author{Rayan Chikhi$^{1,6}$, Antoine Limasset$^2$, Shaun Jackman$^3$, Jared T. Simpson$^4$ and Paul Medvedev$^{1,5,6,\dagger}$}
\institute{
$^1$Department of Computer Science and Engineering, The Pennsylvania State University, USA\\
$^2$ENS Cachan Brittany, Bruz, France\\ 
$^3$Canada's Michael Smith Genome Sciences Centre, Canada\\
$^4$Ontario Institute for Cancer Research, Toronto, Canada\\
$^5$Department of Biochemistry and Molecular Biology, The Pennsylvania State University, USA\\
$^6$Genome Sciences Institute of the Huck, The Pennsylvania State University, USA \\
}
\newcommand{\setof}[2]{\{#1\;:\;#2\}}
\newcommand{\extr}{\overleftarrow{ext}}
\newcommand{\extf}{\overrightarrow{ext}}
\newcommand{\f}{f} 
\newcommand{\idxset}{\text{\sc{idx}}}
\newcommand{\kmer}{$k$-mer\xspace}
\newcommand{\lmer}{$\ell$-mer\xspace}
\newcommand{\kmers}{$k$-mers\xspace}
\newcommand{\lmers}{$\ell$-mers\xspace}
\newcommand{\neighb}{\text{\sc nbr}\xspace}
\newcommand{\const} {\text{\sc const}\xspace}
\newcommand{\mem} {\text{\sc memb}\xspace}
\newcommand{\dbg}   {\text{dBG}\xspace}
\newcommand{\dbgs}   {\text{dBGs}\xspace}
\newcommand{\nds}   {\text{NDS}\xspace}
\newcommand{\sai} {\text{SAI}\xspace}
\newcommand{\sais} {\text{SAIs}\xspace}
\newcommand{\defn}[1]{\emph{#1}}
\newcommand{\lmin}{\text{Lmin}\xspace}
\newcommand{\rmin}{\text{Rmin}\xspace}
\definecolor{orange}{rgb}{1,0.5,0}
\newcommand{\DSK}{\text{\textsc{DSK}}\xspace}
\newcommand{\dbgfm}{\text{\textsc{dbgfm}}\xspace}
\newcommand{\bcalm}{\text{\textsc{bcalm}}\xspace}
\newcommand{\jsnote}[1]{\textcolor{green}{JS -- #1}}
\newcommand{\pmnote}[1]{\textcolor{blue}{PM -- #1}}
\newcommand{\rcnote}[1]{\footnote{\textcolor{cyan}{RC -- #1}}}
\newcommand{\revision}[1]{#1}
\newcommand{\journal}[1]{#1}
\newcommand{\arxiv}[1]{#1} 
\DeclareMathSymbol{\lsb@l}{\mathalpha}{letters}{`l}
\begin{document}

\maketitle
\begin{abstract}
The de Bruijn graph plays an important role in bioinformatics, especially in the context of \emph{de novo} assembly. 
However, the representation of the de Bruijn graph in memory is a computational bottleneck for many assemblers.
Recent papers proposed a navigational data structure approach in order to improve memory usage.
We prove several theoretical space lower bounds to show the limitation\revision{s} of these types of approaches.
We further design and implement a general data structure (\dbgfm) and demonstrate its use on a human whole-genome dataset,
achieving space usage of 1.5 GB and a 46\% improvement over previous approaches.
As part of \dbgfm, we develop the notion of frequency-based minimizers and show how it can be used to 
enumerate all maximal simple paths of the de Bruijn graph using only 43 MB of memory.
Finally, we demonstrate that our approach can be integrated into an existing assembler by modifying the
ABySS software to use \dbgfm.
\let\thefootnote\relax\footnotetext{$^*$A preliminary version of some of these results appeared in \citet{Chikhi14}.}
\let\thefootnote\relax\footnotetext{$^\dagger$Corresponding author, pashadag@cse.psu.edu}
\end{abstract}

\section{Introduction}

\emph{De novo} assembly continues to be one of the fundamental problems in bioinformatics,
with new datasets coming from projects such as the Genome10K~\citep{genome10k}.
The task is to reconstruct an unknown genome sequence from a set of short sequenced fragments.
Most state-of-the-art assemblers 
(e.g.~\citet{allpathslg,soapdenovo,spades,velvet})
start by building a de Bruijn graph (\dbg)~\citep{Pevzner1989,idury_new_1995}, 
which is  a directed graph where each node is a distinct $k$-mer present in the input fragments, and an edge is present between two $k$-mers when they share an exact $(k-1)$-overlap.
The de Bruijn graph is the basis of many steps in assembly, including path compression, bulge removal, graph simplification, and repeat resolution~\citep{miller_assembly_2010}. 
In the workflow of most assemblers, the  graph must, at least initially, reside in memory;
thus, for large genomes, memory is a computational bottleneck.
For example, the graph of a human genome consists of nearly three billions nodes and edges and assemblers require computers with hundreds of gigabytes of memory~\citep{allpathslg,soapdenovo}. 
Even these large resources can be insufficient for many genomes,
such as the 20 Gbp white spruce.
Recent assembly  required a distributed-memory approach 
and  around a hundred large-memory servers, collectively storing a $4.3$ TB de Bruijn graph data structure~\citep{spruce}.

Several articles have pursued the question of whether smaller data structures could be designed to make large genome assembly more accessible~\citep{conway_succinct_2011,ye12,titus,minia,sadakane}.
\citet{conway_succinct_2011} gave a lower bound on the number of bits required to encode a de Bruijn graph consisting of $n$ $k$-mers:
$\Omega(n\lg n)$  (assuming $4^k > n$). 
However,  two groups  independently observed 
that assemblers use \dbgs in a very narrow manner~\citep{minia,sadakane}
and proposed 
a data structure that is able to return the set of neighbors of a given node but is 
not necessarily able to determine if that node is in the graph.
We refer to these as {\em navigational data structures} (\nds).
The navigational data structures proposed in~\citet{minia,sadakane} require $O(n\lg k)$ and $O(n)$\footnote{The paper only showed the number of bits is $O(n\lg n)$. However, the authors  recently indicated in a blog post~\citep{boweblogpost} how the dependence on $\lg(n)$ could be removed, though the result has not yet been published.}  
 bits (respectively),
beating the Conway-Bromage lower bound both in theory and in practice~\citep{minia}. 

What is the potential of these types of approaches to further reduce memory usage?
To answer this question, we first formalize the notion of a navigational data structure and 
then show that any \nds requires at least $3.24n$ bits.
This result leaves a gap with the known upper bounds;
however, even if a \nds could be developed to meet this bound,
could we hope to do better on inputs that occur in practice?
To answer this, we consider a very simple class of inputs: simple paths.
We show that on these inputs (called linear \dbgs), 
there are both navigational and general data structures that 
asymptotically use $2n$ bits and give matching lower bounds.
While \dbgs occurring in practice are not linear, they can nevertheless be often decomposed into a small collection of long simple paths (where all the internal nodes have in- and out-degree of 1).
Could we then take advantage of such a decomposition to develop a data structure that can achieve close to $2n$ bits on practical inputs?

We describe and implement a data structure (\dbgfm) to represent de Bruijn graphs in low memory.
The first step of the construction uses 
existing $k$-mer counting software to transform, in constant memory, the input sequencing dataset to a list of 
\kmers (i.e. nodes) stored on disk~\citep{dsk}.
The second step is a novel low memory algorithm that enumerates all the maximal
simple paths without loading the whole graph in memory. 
We achieve this through the use of 
non-lexicographic minimizers,
ordered based on their frequency in the data.
Finally, we use the FM-index~\citep{fmindex} to store the simple paths in memory 
and answer membership and neighborhood queries.

We prove that as the number of simple paths decreases,
the space utilization of \dbgfm  approaches $2n$ bits.
In practice, \dbgfm uses $4.76n$ bits on a human whole-genome dataset and $3.53n$ bits on a human chr14 dataset, 
improving the state-of-the-art~\citep{cascading}
by 46\% and 60\%, respectively.
We demonstrate the efficiency of frequency-based minimizers by 
collapsing the \dbg of the human whole-genome dataset using only $43$ MB of memory.
Finally, we show how \dbgfm can be integrated into an existing assembler by modifying the ABySS software~\citep{abyss}
to use \dbgfm instead of a hash table.

\section{Previous Work}

In the last three years, 
several papers and assemblers have explored novel data structures designed to reduce the space usage of \dbgs, 
and we provide a brief summary of the results here.


ABySS was one of the first genome assemblers capable of representing large \dbgs~\citep{abyss}. 
It uses an open-addressing hash table that stores the $k$-mers of the graph in the keys. 
The edges can be inferred from the nodes and do not need to be stored.
For every \kmer, ABySS uses $2k$ bits to store the \kmer, plus an additional 43 bits of associated data
(stored in 64 bits for ease of implementation).
Therefore, in total, the space usage of the \dbg data structure in ABySS is $(l^{-1}(2k+64))$ bits per $k$-mer, where $l$ is the load factor of the hash table (set to 0.8).
In the following, we focus on the space needed to store just the \dbg,
since the type of associated data varies greatly between different assemblers.

\citet{conway_succinct_2011}
gave a $\lg{4^k \choose n}$ bits lower bound for representing a \dbg 
and demonstrated a sparse bit array data structure that comes close to achieving it.
They used an edge-centric definition of the \dbg 
(where edges are all the $(k+1)$-mers, and nodes are prefixes and suffixes of length $k$), 
but their results trivially translate to node-centric \dbgs 
by storing $k$-mers instead of $(k+1)$-mers.
For a dataset with $k=27$ and $12\cdot 10^9$ edges (i.e. ($k+1$)-mers), 
their theoretical minimum space is 22 bits per edge 
while their implementation achieves 28.5 bits per edge.

Later work explored the trade-offs between the amount of information retained 
from the de Bruijn graph and the space usage of the data structure.
\citet{ye12}
showed that a graph equivalent to the de Bruijn graph can be stored in a hash table by sub-sampling $k$-mers. 
The values of the hash table record sequences that would correspond to paths between $k$-mers in the de Bruijn graph. 
The theoretical memory usage of this approach is $\Omega(k/g)$ bits per $k$-mer, where $g$ is the distance between consecutive sampled $k$-mers.
\citet{titus} proposed a practical lossy approximation of the de Bruijn graph that stores the nodes in a Bloom filter~\citep{bloomfilters}. 
They found that a space usage of $4$ bits per $k$-mer provided a reasonable approximation of the de Bruijn graph for their purpose (partitioning and down-sampling DNA sequence datasets). 
Yet, the structure has not yet been directly applied to \emph{de novo} assembly.

\citet{minia} 
built upon the structure of~\citet{titus} by 
additionally storing the set of Bloom filter false positives (false neighbors of true nodes in the graph).
In this way, their structure is no longer lossy.
They obtained a navigational data structure that allowed the assembler to exactly
enumerate the in- and out-neighbors of any graph node in constant time. However, the structure does not support node membership queries, and also does not support storing associated data to $k$-mers. 
The theoretical space usage is $(1.44\lg(\frac{16k}{2.08})+2.08)$ bits per $k$-mer, under certain 
assumptions about the false positive rate of the Bloom filter.
%
This corresponds to $13.2$ bits per $k$-mer for $k=27$. 

The structure has recently been improved by~\citet{cascading} with cascading Bloom filters, 
replacing the hash table by a cascade of Bloom filters. 
In theory, if an infinite number of Bloom filters is used, this scheme would require $7.93$ bits per $k$-mer independently of $k$. 
The authors show that using only $4$ Bloom filters is satisfactory in practice, yet they do not provide a formula for the theoretical space usage in this case. For $k=27$ and $2.7\cdot 10^9$ nodes, they computed that their structure uses $8.4$ bits per $k$-mer.
\citet{sadakane} used a tree variant of the Burrows-Wheeler transform~\citep{bwt} to support identical operations. 
They describe a theoretical navigational data structure for representing the \dbg of a set of input sequences that uses 
a space $4m + M \lg(m) + o(m)$ bits, where $M$ is the number of input strings and $m$ the number of graph edges. 
Note that the space is independent of $k$. \revision{Another data structure based on a similar principle has been recently proposed~\citep{kfmindex}}.


\journal{
In addition to studying the representation of de Bruijn graphs, several articles have designed efficient algorithms for constructing the graph.
\citet{yanglipaper} proposed an algorithm based on minimizers that, given a set of reads, outputs to the disk both the edges and the nodes of the de Bruijn graph (essentially performing a $k$-mer counting step). 
\citet{movahedi2012novo} also used minimizers to reduce the memory usage of de Bruijn graph compaction. Their approach consists in partitioning the initial graph into disjoint components, assigning each $k$-mer to a \emph{slice} via its minimizer. Slices are then compacted in no specific order, and the resulting compacted slices are merged together into a final compacted graph. The improvement in terms of memory reduction achieved by this two-stage compaction approach was not analyzed in~\citet{movahedi2012novo}.
Finally, several methods have been recently proposed to construct a compacted de Bruijn graph in linear time and memory from a suffix array~\citep{cazaux,sibelia} or a suffix tree~\citep{splitmem}.
}

\section{Preliminaries}
We assume, for the purposes of this paper, that all strings are over the alphabet $\Sigma = \{A,C,G,T\}$.
A string of length $k$ is called a $k$-mer and $U$ is the universe of all $k$-mers, i.e. $U=\Sigma^k$.
The binary relation $u \rightarrow v$ between two strings 
denotes an exact suffix-prefix overlap of length $(k-1)$ between $u$ and $v$.
For a set of $k$-mers $S$, the \defn{de Bruijn graph} of $S$ is a directed graph such that the nodes are exactly the $k$-mers in $S$
and the edges are given by the  $\rightarrow$ relation. 
We define $S$ to be a \emph{linear} \dbg if there exists a string $x$ where 
all the $(k-1)$-mers of $x$ are distinct and 
$S$ is the set of \kmers present in $x$.
Equivalently, $S$ is a linear \dbg if and only if the graph is a simple path.
The de Bruijn graph of a string $s$ is the de Bruijn graph of all the $k$-mers in $s$.
We adopt the node-centric definition of the de Bruijn graph, where the edges are implicit given the vertices; 
therefore, we use the terms de Bruijn graph and a set of $k$-mers interchangeably.

For a node $x$ in the de Bruijn graph, let $\extr(x)$ be its \revision{four potential} in-neighbors 
(i.e. $\extr(x) = \setof{y}{y \in \Sigma^k, y\rightarrow x}$
) 
and $\extf(x)$ be its \revision{four potential} out-neighbors. 
Let $ext(x) = \extf(x) \cup \extr(x)$. 
For a given set of $k$-mers $S$, let $ext(S) = \{ ext(x), x \in S\}$ (similarly for $\extf(S)$ and $\extr(S)$).

We will need some notation for working with index sets, 
which is just a set of integers that is used to select a subset of elements from another set.
Define $\idxset(i,j)$ as a set of all index sets that select $j$ out of $i$ elements.
Given a set of $i$ elements $Y$ and $X \in \idxset(i,j)$, 
we then write $Y[X]$ to represent the subset of $j$ elements out of $Y$, as specified by $X$.
We assume that there is a natural ordering on the elements of the set $Y$, e.g.
if $Y$ is a set of strings, then the ordering might be the lexicographical one.

The families of graphs we will use to construct the lower bounds of 
Theorems~\ref{thm:generalgraphslb} and \ref{thm:lineargraphslb} 
have $k$ be a 
polylogarithmic function of $|S|$, i.e. $k = O(\log^c |S|)$ for some $c$.
We note that in some cases, higher lower bounds could be obtained using families of graphs with $k = \Theta(|S|)$;
however, we feel that such values of $k$ are unrealistic given the sequencing technologies.
On one hand, the value of $k$ is a bounded from above by the read length, 
which is experimentally independent of the number of \kmers.
On the other hand, $k$ must be at least $\log_4(|S|)$ in order for there to be at least $|S|$ distinct $k$-mers.

\section{Navigational data structures}

We use the term {\em membership data structure} to refer to a way of representing a \dbg 
and answering \kmer membership queries.
We can view this as a pair of algorithms: $(\const,\mem)$.
The \const algorithm takes a set of \kmers $S$ (i.e. a \dbg) and outputs a bit string.
We call \const a constructor, since it constructs a representation of a \dbg.
The \mem algorithm takes as input a bit string and a \kmer $x$ and outputs true or false.
Intuitively, \mem takes a representation of a \dbg created by \const and outputs whether a 
given \kmer is present.
Formally, we require that for all $x \in \Sigma^k$, $\mem(\const(S),x)$ is true if and only if $x\in S$.
An example membership data structure, as used in ABySS, is one where the \kmers are 
put into a hash table (the \const algorithm) and
membership queries are answered by hashing the \kmer to its location in the table (the \mem algorithm).

Recently, it was observed that most assemblers use the \mem algorithm in a limited way~\citep{minia,sadakane}. 
They do not typically ask for membership of a vertex that is not in $ext(S)$, but, 
instead, ask for the neighborhood of nodes that it already knows are in the graph. 
We formalize this idea by introducing the term navigational data structure (\nds), 
inspired by the similar idea of performing navigational queries on trees~\citep{xbwt}.
An \nds is a pair of algorithms, \const and \neighb.
As before, 
\const takes a set of \kmers and outputs a bit string.
\neighb takes a bit string and a \kmer, and outputs a set of \kmers.
The algorithms  must satisfy that for every \dbg $S$ and a \kmer $x \in S$,
$\neighb(\const(S),x)=ext(x) \cap S$.
Note that if $x \notin S$, then the behavior of $\neighb(\const(S),x)$ is undefined.
We observe that a membership data structure immediately implies a \nds because a \neighb query 
can be reduced to eight \mem queries.

To illustrate how such a data structure can be useful, consider a program that can enumerate nodes using external memory (e.g. a hard drive or a network connection). 
Using external memory to navigate the graph by testing node membership would be highly inefficient because of long random access times. 
However, it is acceptable to get a starting node from the device and access the other nodes using the proposed data structure.

There are several important aspects of both a navigational and membership data structures,
including the space needed to represent the output of the constructor, 
the memory usage and running time of the constructor,
and the time needed to answer either neighborhood or membership queries.
For proving space lower bounds, we make no restriction on the other resources so that our bounds hold more generally.
However, adding other constraints 
(e.g. query time of $\lg n$) may allow us to prove higher lower bounds and
is an interesting area for future work.

\section{Navigational data structure lower bound for de Bruijn graphs \label{sec:lbgen}}

In this section, we prove that a navigational data structure on de Bruijn graphs needs at least 3.24 bits per $k$-mer to represent the graph:
\begin{restatable}{thm}{generalgraphslb}
    \label{thm:generalgraphslb}
	Consider an arbitrary \nds and let $\const$ be its constructor.
    For any $0 < \epsilon < 1$,
	there exists a $k$ and $x\subseteq \Sigma^k$ such that $|\const(x)| \geq |x|\cdot (c - \epsilon) $,
	where 
	$c = 8 - 3\lg3 \approx 3.25$.
\end{restatable}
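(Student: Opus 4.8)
The plan is to combine a navigational ``reconstruction'' argument with a counting bound. The key reduction I would establish first is: \emph{if $\mathcal{F}$ is a family of \dbgs that are all connected in the undirected sense induced by $ext$ (i.e.\ $y$ and $x$ are adjacent whenever $y \in ext(x)$, which is symmetric) and all contain a common fixed \kmer $r$, then $\const$ must be injective on $\mathcal{F}$.} To see this, suppose $\const(S_1) = \const(S_2) =: w$ with $S_1 \neq S_2$ both in $\mathcal{F}$, and pick $v \in S_1 \setminus S_2$. Since $S_1$ is connected there is a path $r = u_0, u_1, \dots, u_t = v$ inside $S_1$; let $u_j$ be its first vertex not in $S_2$, so that $u_{j-1} \in S_1 \cap S_2$ and $u_j \in ext(u_{j-1})$. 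Then $\neighb(w, u_{j-1})$ must equal both $ext(u_{j-1}) \cap S_1$ and $ext(u_{j-1}) \cap S_2$, yet $u_j$ lies in the former and not the latter, a contradiction. Hence $|\mathcal{F}|$ distinct inputs force $|\mathcal{F}|$ distinct outputs of $\const$, so some $x \in \mathcal{F}$ has $|\const(x)| \geq \lg|\mathcal{F}|$. It therefore suffices to exhibit, for each large $n$, such a family whose members have exactly $n$ nodes and satisfy $\lg|\mathcal{F}| \geq (c-\epsilon)\,n$.

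For the family I would use rooted ``forward'' trees. Fix a root \kmer $r$ and build a set $S$ of \kmers by repeatedly choosing, at each already-included node $x$, an arbitrary subset of its four potential out-neighbors $\extf(x)$ to include, extending only forward so that the induced \dbg is exactly a tree rooted at $r$. Every such $S$ is connected and contains $r$, so the reduction applies once we keep only those with exactly $n$ nodes. The number of such trees is governed by the generating function $T(z) = z\bigl(1 + T(z)\bigr)^4$, since each of the four labelled out-slots of a node is either empty (factor $1$) or carries a subtree (factor $T$), and the node itself contributes a factor $z$. Writing $T = z\,\phi(T)$ with $\phi(T) = (1+T)^4$, the dominant singularity sits at the branch point $\phi(\tau) = \tau\,\phi'(\tau)$, i.e.\ $1 + \tau = 4\tau$, giving $\tau = 1/3$ and $z^\ast = \tau/\phi(\tau) = 27/256$. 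Thus the number of $n$-node trees grows like $\mu^n$ with $\mu = 256/27$, so $\lg|\mathcal{F}| = n\lg(256/27) - o(n) = (8 - 3\lg 3)\,n - o(n)$, which exceeds $(c-\epsilon)\,n$ for all large $n$. This is exactly where the constant $c = 8 - 3\lg 3$ comes from.

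The remaining, and I expect hardest, part is \emph{realizability}: I must guarantee that each chosen tree is genuinely an induced \dbg on $n$ \emph{distinct} \kmers whose only adjacencies are the intended parent--child edges (so that the map from trees to \kmer-sets is injective and no spurious $(k-1)$-overlap collapses the tree or merges two members of $\mathcal{F}$). A long forward path forgets its prefix after $k$ steps, so controlling collisions constrains how $k$ must grow with $n$. The way I would handle this, consistent with the paper's stipulation that $k = \polylog(|S|)$, is to restrict to trees of depth $O(\log n)$ that are bushy enough to still contain $n$ nodes; for such shallow trees one can choose labels making all \kmers distinct and overlap-free with $k = \polylog(n)$, while arguing via the same singularity analysis that truncating the depth costs only a $(1-o(1))$ factor in the exponential growth rate, preserving the constant up to the allowed $\epsilon$. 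Quantifying this truncation loss and verifying the collision-free labeling is the main obstacle; by contrast, the injectivity reduction and the computation of $\mu = 256/27$ are routine once the setup is in place.
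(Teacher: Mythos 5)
Your injectivity reduction is correct and is in fact a clean generalization of what the paper does: the paper's family is likewise connected and shares its entire first level $T_0$ across all members, and its Property~3 (\texttt{goodelement}) finds the distinguishing \kmer exactly at the first point of divergence along a directed path from the shared part, just as your $u_{j-1}$ does. The constant is also the right one and for the right reason: $\lg(256/27) = 8 - 3\lg 3$ is the same entropy that the paper extracts from $\binom{4m}{m}^{\ell}$ via Sondow's inequality (choosing a density-$1/4$ subset of four labelled slots per node). So the architecture of your argument matches the paper's.

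The genuine gap is the step you yourself flag: showing that restricting to trees of depth $\polylog(n)$ (forced by $k=\polylog(n)$, since a forward path forgets its prefix after $k$ steps) costs only a subexponential factor. This does \emph{not} follow from ``the same singularity analysis.'' The singularity analysis at $z^{*}=27/256$ counts \emph{all} $n$-node subtrees of the $4$-ary tree, and a uniformly random such tree has height $\Theta(\sqrt{n})$; your truncation therefore discards all but an $\exp(-\Theta(n/h^2))$-fraction of the family, and establishing that this fraction is still $2^{-o(n)}$ for $h=\polylog(n)$ requires a lower-tail estimate for the height of conditioned Galton--Watson trees (Flajolet--Gao--Odlyzko--Richmond-type), which is a substantial separate result, not a routine corollary. (Your secondary worry, ``choosing labels'' to avoid collisions, is actually the easy part and is slightly misstated: the labels are forced by the \dbg once the root is fixed; taking the root to be $\mathrm{A}^{k-1}\mathrm{T}$ makes the position of the first $\mathrm{T}$ encode the depth, so all nodes up to depth $k-1$ are automatically distinct and the only adjacencies are parent--child --- this is precisely the paper's $\mathrm{A}^{\ell-i}\mathrm{T}\alpha$ device.) The paper avoids the height-deviation issue entirely by exhibiting an \emph{explicit} depth-$\ell$ subfamily whose size it can compute exactly: at each of $\ell=k/2$ levels it keeps $m=4^{\ell-1}$ of the $4m$ children of the previous level, giving $\binom{4m}{m}^{\ell}\ge 2^{(c-\epsilon_0)\ell m}$ members on $n=m(\ell+1)$ nodes. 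To close your proof you would either need to prove the height lower-tail bound, or replace the generating-function count by such an explicit bushy subfamily --- at which point you have reconstructed the paper's construction.
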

We will first present a high level overview of the proof strategy followed by the formal proof afterwards.

\subsection{Proof strategy overview}
\revision{Our proof strategy is to construct a family of graphs, for which the number of navigational data structures is at least the size of the family.} The full proof of the theorem is in the Section~\ref{sec:lbgen:details}, however,
we will describe the construction used and the overall outline here.
Our first step is to construct a large \dbg $T$ and later we will choose subsets as members of our family.
Fix an even $k \ge 2$, 
let $\ell = k/2 $, and let $m = 4^{\ell - 1}$.
$T$ will be defined as the union of $\ell + 1$ levels, $T = \cup_{0 \le i \le \ell} T_i$.
For $0 \leq i \leq \ell$, we define the $i^{\text{th}}$ level as 
$T_i = \setof{\text{``A$^{\ell-i}$T}\alpha"}{\alpha \in \Sigma^{ i + \ell - 1 }}$.
Observe that $T_i = \extf(T_{i-1})$, for $1 \leq i \leq \ell$.
See Figure~\ref{fig:lb} for a small example.

\arxiv{

\begin{figure}[t]
\centering
\includegraphics[scale=0.8]{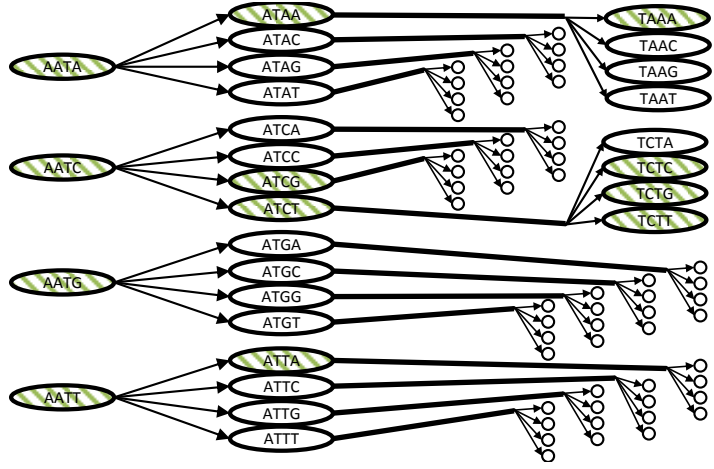}

\caption{
Example of lower bound construction for $k=4$. 
The figure shows $T$ along with some of the node labels.
The four nodes on the left form $T_0$, the 16 nodes in the middle are $T_1$,
and the nodes on the right are $T_2$.
For space purposes, some of the edges from $T_1$ to $T_2$ are grouped together.
An example of a member from the family is shown with shaded vertices.
Note that there are four vertices at each level, and together
they form a subforest of $T$.
\label{fig:lb}}
\end{figure}
}


We focus on constructing \dbgs that are subsets of $T$ because $T$ has some desirable properties.
In fact, one can show that the set of $k$-mers $T$ induces a forest in the \dbg of $\Sigma^k$ 
(Lemmas~\ref{lem:tree2} and \ref{lem:tree3} in the Appendix).
Each member of our family will be a subforest of $T$ that contains $m$ vertices from every level.

Formally, suppose we are given a sequence of index sets $X = X_1, \ldots, X_\ell$ 
where every index set is a member of $\idxset(4m,m)$.
Each index set will define the subset of vertices we select from a level, and 
we define 
$L^X_0 = T_0$ and $L^X_i = \extf(L^X_{i-1})[X_i]$, for $1 \leq i \leq \ell$.
Note that $L^X_i \subseteq T_i$. 
In this manner, the index sets define a set of $k$-mers $s(X) = \cup_{0\leq i \leq l} L^X_i$.
Finally, the family of graphs which we will use for our proof is given by:
$$
S^k = \setof{s(X_1,\ldots, X_\ell)}{ \ell = k/2, m = 4^{\ell-1}, X_i \in \idxset(4m,m)}
$$

To prove Theorem~\ref{thm:generalgraphslb}, 
we first show that each of the \dbgs of our family have the same amount of \kmers:
\begin{restatable}{prop}{sizen}
    \label{lem:sizen}
    For all $x \in S^k$, $|x| = 4^{\ell - 1}(\ell + 1)$.
\end{restatable}
Next, we show that each choice of $X$ leads to a unique graph $s(X)$ (Lemma~\ref{lem:unique})
and use it to show that the numbers of graphs in our family is large,
relative to the number of \kmers in each set:
\begin{restatable}{prop}{size}
    \label{lem:size}
	$|S^k| = { 4m \choose m } ^{\ell} \geq 2^{(c-\epsilon_0)\ell m}$, where $c= 8 - 3\lg 3$ and 
	$\epsilon_0 = \lg(12m) / m$.
\end{restatable}
Finally, we need to show that for any two graphs in the family,
there is at least one $k$-mer that appears in both graphs but with different neighbors:
\begin{restatable}{prop}{goodelement}
	\label{lem:goodelement}
	Let $x = s(X) \in S^k$ and 
	    $y = s(Y) \in S^k$
	be two distinct elements in $S^k$.
	Then, there exists a $k$-mer  $u \in x \cap y$ 
	such that $\extf(u) \cap x \ne \extf(u) \cap y$.
\end{restatable}

The proof of Theorem~\ref{thm:generalgraphslb} now follows 
by using the pigeonhole principle to argue that the number of navigational data structures
must be at least the size of our family,
giving a lower bound on the bits per $k$-mer.

\subsection{Proof details}\label{sec:lbgen:details}
We now give a formal proof of the three Properties and the Theorem.
\begin{restatable}{lem}{treeii}
    \label{lem:tree2}
    Let $y\in T$.
    There exists a unique $0 \le i \le \ell$ such that $y\in T_i$.
\end{restatable}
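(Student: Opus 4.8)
The plan is to split the statement into the (immediate) existence part and the (substantive) uniqueness part. Existence is trivial: since $T = \cup_{0 \le i \le \ell} T_i$ by definition, any $y \in T$ lies in $T_i$ for at least one $i$. All the content is in showing that the levels $T_0, \ldots, T_\ell$ are pairwise disjoint, so that this $i$ is forced to be unique.

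My approach to uniqueness is to exhibit an invariant of the string $y$ that recovers the level index. For a string $y$, let $a(y)$ denote the length of the longest prefix of $y$ consisting solely of the letter A. I would show that whenever $y \in T_i$, one has $a(y) = \ell - i$. By construction every element of $T_i$ has the form $\text{A}^{\ell-i}\text{T}\alpha$ with $\alpha \in \Sigma^{i + \ell - 1}$: the first $\ell - i$ characters are A, and the character in position $\ell - i + 1$ is T. Hence the leading run of A's has length exactly $\ell - i$, not more, precisely because the separating T blocks it; this gives $a(y) = \ell - i$.

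With this invariant in hand, uniqueness follows directly. If $y \in T_i \cap T_j$, then evaluating $a(y)$ two ways yields $\ell - i = a(y) = \ell - j$, so $i = j$. Since $a(y)$ depends only on the string $y$ and not on any choice of level, the index $i$ is determined by $y$ alone, which is exactly the uniqueness statement.

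The only point requiring care—and the closest thing to an obstacle—is confirming that the separating T genuinely pins the run length down at the two extremes and in the presence of further A's inside $\alpha$. When $i = \ell$ the prefix $\text{A}^{\ell-i}$ is empty, so $y = \text{T}\alpha$ and $a(y) = 0 = \ell - i$, which is consistent. When $\alpha$ itself begins with one or more A's, the leading A-run of $y$ is still cut off by the T in position $\ell - i + 1$, so $a(y)$ remains $\ell - i$, independent of the contents of $\alpha$. Checking these cases is routine, so I expect the proof to be short once the invariant $a(\cdot)$ has been isolated.
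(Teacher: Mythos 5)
Your proposal is correct and uses essentially the same idea as the paper: the paper distinguishes elements of two different levels by the position of the separating T (at position $\ell-i+1$ one string has a T where the other still has an A), which is exactly your invariant ``length of the leading run of A's'' phrased as a comparison of two strings rather than as a function of one. Your handling of the edge cases ($i=\ell$ and $\alpha$ beginning with A's) is a nice explicit check, but the argument is the same.
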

\begin{proof}
    Take two arbitrary levels $i_1 < i_2$ and two arbitrary vertices in those levels, 
    $x_1 \in T_{i_1}$ and 
	$x_2 \in T_{i_2}$.
	Let $z \in \{1,2\}$.
	The $k$-mer represented by $x_z$ is 
	$\text{``A$^{\ell - i_z + 1}$T$\alpha_z$"}$,
	where $\alpha_z$ is some string.
	At position $\ell - i_1 + 1$, $x_1$ has a T, while $x_2$ has an A.
	Therefore, $x_1 \ne x_2$ and the lemma follows.
\end{proof}

\begin{restatable}{lem}{treeiii}
	\label{lem:tree3}
	For all distinct $x_1$ and $x_2$ in $T$ that are not in the last level ($T_\ell$),
	$\extf(x_1) \cap \extf(x_2) = \emptyset$.
\end{restatable}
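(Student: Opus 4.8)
The plan is to reduce the claim about out-neighbor sets to a claim about suffixes, and then exploit the rigid ``block of A's followed by a T'' structure in the definition of the levels. First I would observe that for any node $x$, the four potential out-neighbors $\extf(x)$ are exactly the \kmers obtained by deleting the first character of $x$ and appending one of the four letters; hence every element of $\extf(x)$ shares the same length-$(k-1)$ prefix, namely the length-$(k-1)$ suffix of $x$. Consequently, if some \kmer lies in $\extf(x_1)\cap\extf(x_2)$, its first $k-1$ characters must equal both the suffix of $x_1$ and the suffix of $x_2$, forcing $x_1$ and $x_2$ to share the same length-$(k-1)$ suffix. So it suffices to prove that distinct nodes of $T$ outside the last level have distinct length-$(k-1)$ suffixes; the lemma then follows by contraposition.

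To prove this, I would write each node explicitly. By Lemma~\ref{lem:tree2}, $x_1$ and $x_2$ lie in unique levels $T_{i_1}$ and $T_{i_2}$ with $i_1,i_2\le\ell-1$, so each $x_z$ has the form $A^{\ell-i_z}T\alpha_z$ for some string $\alpha_z$. Deleting the leading character, the length-$(k-1)$ suffix of $x_z$ is $A^{\ell-i_z-1}T\alpha_z$; since $i_z\le\ell-1$ the exponent $\ell-i_z-1$ is nonnegative, so this suffix consists of exactly $\ell-i_z-1$ leading A's, then a T at position $\ell-i_z\ge 1$, then $\alpha_z$. The key invariant I would extract is that the position of the first T in the suffix equals $\ell-i_z$ and is therefore determined by the level alone.

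Then I would split into two cases. If $i_1=i_2$, equality of the suffixes forces $\alpha_1=\alpha_2$, hence $x_1=x_2$, contradicting distinctness. If $i_1\ne i_2$, say $i_1<i_2$, then the first T of the suffix of $x_2$ sits at position $\ell-i_2$, whereas in the suffix of $x_1$ that same position falls strictly inside the leading A-block, since $\ell-i_2\le\ell-i_1-1$; thus $x_1$'s suffix carries an A at position $\ell-i_2$ while $x_2$'s carries a T. The two suffixes differ at position $\ell-i_2$, so the suffixes are distinct and $\extf(x_1)\cap\extf(x_2)=\emptyset$. This mirrors the positional argument already used for Lemma~\ref{lem:tree2}, now applied to the suffix rather than the whole \kmer.

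The step I expect to require the most care is the reduction in the first paragraph together with seeing precisely why the last level must be excluded: a node of $T_\ell$ has no leading A's, so deleting its first character removes the distinguishing T outright, and its length-$(k-1)$ suffix could then coincide with that of a lower-level node. The hypothesis $x_1,x_2\notin T_\ell$ is exactly what guarantees $\ell-i_z\ge 1$, so that the first-T marker survives into the suffix; the remainder is a routine off-by-one check on the indices.
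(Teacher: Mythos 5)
Your proof is correct and follows essentially the same route as the paper's: both invoke Lemma~\ref{lem:tree2} to fix the unique levels $i_1,i_2$, split on whether $i_1=i_2$, and use the rigid $A^{\ell-i}T\alpha$ structure to separate the out-neighbor sets. Your reformulation via injectivity of the $(k-1)$-suffix map on $T\setminus T_\ell$ is just a clean repackaging of the same argument (the paper instead applies Lemma~\ref{lem:tree2} directly to the extensions, which lie in distinct levels, for the cross-level case), and your closing remark on why $T_\ell$ must be excluded correctly identifies the role of that hypothesis.
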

\begin{proof}
	By Lemma~\ref{lem:tree2},
	there exist unique levels $i_1$ and $i_2$ such that $x_1 \in T_{i_1}$ and $x_2 \in T_{i_2}$.
	We first observe that $\extf(x_z) \in T_{i_z+1}$, for $z \in \{1,2\}$.
	If it is the case that $i_1 \ne i_2$, then Lemma~\ref{lem:tree2} 
	applied to the vertices in the extensions prove the lemma.
	Now suppose that $i_1 = i_2$, and we write $i = i_1$.
	Then, for $z \in \{1,2\}$, 
	the $k$-mer represented by $x_z$ is 
	$\text{``A$^{\ell - i }$T$\alpha_z$"}$,
	where $\alpha_z$ is a ($\ell + i - 1 $)-mer and $\alpha_1 \ne \alpha_2$.
	We can then write the extensions as 
$\extf(x_z) = 
\setof
{
\text{``A$^{\ell - i-1}$T$\alpha_z\beta$"}
} {
\beta \in \{A,C,G,T\}
}$.
Because $\alpha_1 \ne \alpha_2$, the sets $\extf(x_1)$ and $\extf(x_2)$ share no common elements.
\end{proof}

Property~\ref{lem:sizen} now follows directly from Lemmas~\ref{lem:tree2} and \ref{lem:tree3}.
To prove Property~\ref{lem:size}, we need the following two lemmas:
\begin{restatable}{lem}{unique}
    \label{lem:unique}
	Let $X = X_1,\ldots, X_\ell$ and
	$Y = Y_1,\ldots, Y_\ell$ be two sequences of index sets.
	Then $s(X) = s(Y)$ if and only if $X = Y$.
\end{restatable}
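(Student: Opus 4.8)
The plan is to prove both directions, with the forward implication being immediate and the reverse requiring an inductive argument over the levels. The ``if'' direction is trivial: if $X = Y$, then the two constructions produce identical selections at every level by definition, so $s(X) = s(Y)$. For the ``only if'' direction, I would first use the level structure to reduce the statement to a level-by-level comparison. Since each $L^X_i \subseteq T_i$ and Lemma~\ref{lem:tree2} guarantees that the levels $T_0, \ldots, T_\ell$ are pairwise disjoint, intersecting $s(X)$ with $T_i$ recovers $L^X_i$ exactly; that is, $L^X_i = s(X) \cap T_i$, and likewise $L^Y_i = s(Y) \cap T_i$. Hence the hypothesis $s(X) = s(Y)$ immediately yields $L^X_i = L^Y_i$ for every $0 \le i \le \ell$.

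Next I would induct on $i$ to deduce $X_i = Y_i$. The base case is vacuous at level $0$, where $L^X_0 = T_0 = L^Y_0$ carries no index set. For the inductive step, assume $L^X_{i-1} = L^Y_{i-1}$; applying $\extf$ to both sides gives a common set $E := \extf(L^X_{i-1}) = \extf(L^Y_{i-1})$. Since the construction keeps each level at exactly $m$ nodes, $L^X_{i-1}$ consists of $m$ distinct vertices, none of which lies in the last level $T_\ell$; by Lemma~\ref{lem:tree3} their forward extensions are pairwise disjoint, so $|E| = 4m$, and the map sending an index set in $\idxset(4m,m)$ to the corresponding $m$-element subset of $E$ is a bijection. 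Because $L^X_i = E[X_i]$ and $L^Y_i = E[Y_i]$ were shown equal in the first step, injectivity of this map forces $X_i = Y_i$. Combining over all $i$ gives $X = Y$.

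The point requiring the most care is the well-definedness of the selection operator $E[\cdot]$: the index set $X_i$ picks elements of $E$ according to a fixed natural ordering, so I must verify that $\extf(L^X_{i-1})$ and $\extf(L^Y_{i-1})$ agree not merely as sets but as ordered sets. This is where Lemma~\ref{lem:tree3} does the real work, ensuring there are exactly $4m$ distinct extensions with no coincidental collisions that would shrink $E$ or perturb its ordering; since the ordering is determined intrinsically by the underlying strings, equality of the sets already entails equality of the orderings, so equal selected subsets correspond unambiguously to equal index sets. I do not anticipate any genuine difficulty beyond carrying this observation cleanly through the induction.
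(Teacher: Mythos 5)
Your proof is correct and follows essentially the same route as the paper's: both hinge on the pairwise disjointness of the levels (Lemma~\ref{lem:tree2}) to localize any disagreement to a single level, and on the fact that once the previous levels coincide, the index sets select from the same $4m$-element extension set. The paper argues the contrapositive more tersely and leaves implicit the injectivity of the selection map $E[\cdot]$ (which, as you note, rests on Lemma~\ref{lem:tree3} guaranteeing $|E|=4m$); your version simply makes that step explicit.
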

\begin{proof}
    Since the construction is fully deterministic and depends only on the index sets, 
	then $X = Y$ immediately implies $s(X) = s(Y)$.
	For the other direction,
	suppose that $X \ne Y$.
	Let $i>0$ be the smallest index such that $X_i \ne Y_i$.
	Then there exists a vertex $y$ such that $y \in L^X_i$ but $y \notin L^Y_i$.
	Since $y$ is in  $T_i$ but not in $L^Y_i$, Lemma~\ref{lem:tree2} implies that $y\notin s(Y)$.
\end{proof}

\begin{restatable}{lem}{choosebound}
	\label{lem:choosebound}
	For all $m>0$, 
	${ 4m \choose m} \ge 2^{(c-\epsilon_0)m}$, where
    $c = 8 - 3\lg3$ and
	$\epsilon_0 = \lg(12m) / m$.
\end{restatable}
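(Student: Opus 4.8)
The plan is to prove the binomial bound $\binom{4m}{m} \ge 2^{(c-\epsilon_0)m}$ by starting from a clean lower bound on the central-type binomial coefficient and then extracting the leading exponential term. First I would invoke a standard estimate for binomial coefficients of the form $\binom{i}{j} \ge \frac{1}{i+1} 2^{i H(j/i)}$, where $H(p) = -p\lg p - (1-p)\lg(1-p)$ is the binary entropy function; this follows from the fact that the largest of the $i+1$ terms in the binomial expansion of $2^i$ is at least the average, combined with the identity $\binom{i}{j} p^j(1-p)^{i-j}$ being maximized near $j = pi$. Applying this with $i = 4m$ and $j = m$ (so $p = 1/4$) gives $\binom{4m}{m} \ge \frac{1}{4m+1} 2^{4m \cdot H(1/4)}$.

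The key computation is then to show that $4 H(1/4)$ equals exactly the constant $c = 8 - 3\lg 3$. Writing out the entropy, $H(1/4) = -\tfrac14 \lg\tfrac14 - \tfrac34 \lg\tfrac34 = \tfrac14 \cdot 2 - \tfrac34(\lg 3 - 2) = \tfrac12 + \tfrac32 - \tfrac34 \lg 3 = 2 - \tfrac34 \lg 3$, so that $4 H(1/4) = 8 - 3\lg 3 = c$. This is the heart of the argument and pins down why the specific constant $8 - 3\lg 3$ appears. With this in hand, the bound reads $\binom{4m}{m} \ge \frac{1}{4m+1} 2^{cm}$, and I would absorb the polynomial prefactor into the exponent: since $\frac{1}{4m+1} \ge \frac{1}{12m}$ for $m \ge 1$, we have $\binom{4m}{m} \ge \frac{1}{12m} 2^{cm} = 2^{cm - \lg(12m)} = 2^{(c - \lg(12m)/m)m} = 2^{(c-\epsilon_0)m}$, exactly matching the claimed form of $\epsilon_0$.

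The main obstacle, such as it is, is justifying the entropy lower bound $\binom{i}{j} \ge \frac{1}{i+1} 2^{iH(j/i)}$ cleanly, since this is the one non-elementary ingredient. I would prove it in a self-contained way: expand $1 = (p + (1-p))^i = \sum_{t=0}^{i} \binom{i}{t} p^t (1-p)^{i-t}$, observe that this sum has $i+1$ terms so some term is at least $\frac{1}{i+1}$, and check that when $p = j/i$ the term $t = j$ is the largest (by a standard ratio-of-consecutive-terms monotonicity argument). This yields $\binom{i}{j} p^j (1-p)^{i-j} \ge \frac{1}{i+1}$, and rearranging with $p = j/i$ gives precisely $\binom{i}{j} \ge \frac{1}{i+1} 2^{iH(j/i)}$ after noting $-\lg\!\big(p^j(1-p)^{i-j}\big) = i H(p)$.

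Once these pieces are assembled, the lemma follows directly, and Property~\ref{lem:size} is then immediate by raising the per-level bound to the $\ell$-th power, since $|S^k| = \binom{4m}{m}^\ell$ by Lemma~\ref{lem:unique}. I would keep the proof short, treating the entropy inequality as the single substantive step and the constant-matching and prefactor-absorption as routine verifications.
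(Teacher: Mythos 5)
Your proof is correct. The paper itself disposes of this lemma in one line by citing an inequality of Sondow, $\binom{rm}{m} > \frac{2^{cm}}{4m(r-1)}$ with $c = r\lg r - (r-1)\lg(r-1)$, which at $r=4$ gives exactly the constant $c = 8 - 3\lg 3$ and the prefactor $\frac{1}{12m}$ that the definition of $\epsilon_0 = \lg(12m)/m$ is tailored to. You instead derive an equivalent bound from scratch via the standard entropy estimate $\binom{i}{j} \ge \frac{1}{i+1}2^{iH(j/i)}$, and your identification $4H(1/4) = 8 - 3\lg 3$ is precisely why Sondow's constant takes that form (in general $rH(1/r) = r\lg r - (r-1)\lg(r-1)$). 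Your route is self-contained and elementary where the paper's is a black-box citation; it even yields the marginally sharper prefactor $\frac{1}{4m+1}$, which you then correctly weaken to $\frac{1}{12m}$ (valid for integers $m\ge 1$, which is all that is ever used since $m = 4^{\ell-1}$) so as to land on the stated $\epsilon_0$. The supporting details — the mode of the binomial distribution being at $t=j$ when $p=j/i$, and the identity $-\lg\bigl(p^j(1-p)^{i-j}\bigr) = iH(p)$ — are all standard and check out. The only cost of your approach is length; the only cost of the paper's is reliance on an external reference.
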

\begin{proof}
	Follows directly from an inequality of~\citet{sondow_problem}:
    $ { rm \choose m} > \frac{2^{cm}}{4m(r-1)} $.
\end{proof}

\size*
\begin{proof}
	Lemma~\ref{lem:unique}
	tells us that the size of $S^k$ is the number of possible ways one could choose $X_1,\ldots,X_\ell$
	during the construction of each element $s(X_1,\ldots,X_\ell)$.
	The choice for each $X_i$ is independent, and there are ${4m \choose m}$ possibilities.
	Hence, there are ${4m \choose m}^\ell$ total choices.
	The inequality follows from Lemma~\ref{lem:choosebound}.
\end{proof}

We can now prove Property~\ref{lem:goodelement} and Theorem~\ref{thm:generalgraphslb}:
\goodelement*
\begin{proof}
	By Lemma~\ref{lem:unique},  $X \ne Y$.
	Let $i$ be the smallest index such that $X_i \ne Y_i$, 
	and let $v$ be an element in $L^X_i$ but not in $L^Y_i$.
	By construction, 
	there exists a vertex $u \in L^X_{i-1}$ (and hence in $L^Y_{i-1}$) such that $v \in \extf(u)$.
	Lemma~\ref{lem:tree2} tells us that $v$ is not in $y$ and hence
	$u$ satisfies the condition of the lemma.
\end{proof}

\generalgraphslb*
\begin{proof}
	Assume for the sake of contradiction that for all $x$, $|\const(x)| < |x|(c - \epsilon)$.
	Let 
    $k$ be a large enough integer such that $k > 2c\epsilon^{-1}$ and $\epsilon_0 < (\epsilon(l+1)-c)/l$ holds (with $m,l,\epsilon_0$ as defined above). The second inequality is verified for any large value of $k$, since $\epsilon_0 = \Theta(l/4^l)$ converges to $0$ and $(\epsilon(l+1)-c)/l$ converges to $\epsilon$.
    Let $n = 4^{k/2 - 1}(k/2 + 1)$.
	Consider the outputs of \const on the elements of $S^k$.
	When the input is constrained to be of size $n$, 
	the output must use less than $(c - \epsilon) n$ bits (by Lemma~\ref{lem:sizen}).
	Hence the range of \const over the domain $S^k$ has size less than $2^{(c - \epsilon)n}$.
	At the same time, Lemma~\ref{lem:size} states that there are at least $2^{(c-\epsilon_o)lm}$ elements in $S^k$.
 
    From the inequality $\epsilon_0 < (\epsilon(l+1)-c)/l$ we derive that $(c-\epsilon_0)l > (c-\epsilon)(l+1)$ and thus  $2^{(c-\epsilon_o)\ell m} > 2^{(c-\epsilon)n}$.
    Therefore,
	there must exist distinct $s_1, s_2 \in S^k$ such that $\const(s_1) = \const(s_2)$.
	We can now apply Lemma~\ref{lem:goodelement} to obtain an element $y \in s_1 \cap s_2$
	that is a valid input to $\const(s_1)$ and to $\const(s_2)$.
	Since the two functions are the same, the return value must also the same. 
	However, we know that the out-neighborhoods of $y$ are different in $s_1$ and in $s_2$, 
	hence, one of the results of \neighb on $y$ must be incorrect.
	This contradicts the correctness of $\const$.
\end{proof}

\section{Linear de Bruijn graphs \label{sec:lblin}}
In this section, we study data structures to represent linear de Bruijn graphs.
Though a linear \dbg will never occur in practice,
it is an idealized scenario which lets
us capture how well a data structure can do in the best case.
The bounds obtained here also serve as motivation for our approach in later sections,
where we build a membership data structure whose space usage approaches 
our lower bound from this section the ``closer'' the graph is to being linear.

We can design a naive membership data structure for linear \dbgs.
A linear \dbg with $n$ \kmers corresponds to a string of length $n + k - 1$.
The constructor builds and stores the string from the \kmers,
while the membership query simply does a linear scan through the string.
The space usage  is $2(n+k-1)$ bits. 
The query time is  prohibitively slow, and 
we show in Section~\ref{sec:dbgfm} how to achieve a much faster solution
at the cost of a space increase.

We now prove that a \nds on linear de Bruijn graphs needs
at least $2n$ bits to represent the graph, meaning one cannot do much better
than the naive representation above.
We will first give a proof strategy overview, followed by the formal proof details.

\subsection{Proof strategy overview}

In general, representing all strings of length $n + k - 1$ would take $2 ( n + k - 1)$ bits,
however, not all strings of this length correspond to linear \dbgs.
Fortunately, we can adapt a probabilistic result of~\citet{gagie2012} to
quantify the number of strings of this length that have no duplicate \kmers 
(Lemma~\ref{lem:nb_strings_repeated_kmer}).
Our strategy is similar to that of Section~\ref{sec:lbgen}.
We construct a large family of linear \dbgs such that for any pair of members, there is always a \kmer
that belongs to both but whose neighborhoods are different.
We build the family by taking the set of all strings without duplicate $(k-1)$-mers and 
identifying a large subset having the same starting \kmer.
We 
then 
show that
by increasing the length of the strings and $k$, we can create a family of size arbitrarily close to $4^n$
(Lemma~\ref{lem:nb_strings_non_repeated_kmer}).
Finally, we show that because each string in the family starts with the same \kmer,
there always exists a distinguishing \kmer for any pair of strings.
Using the pigeonhole principle, this implies that
number of navigational data structures must be at least the number of distinct strings: 
\begin{restatable}{thm}{lineargraphslb}
\label{thm:lineargraphslb}
Consider an arbitrary \nds for linear de Bruijn graphs and let \const be its constructor.
Then, for any $0 < \epsilon < 1$, there exists $(n,k)$ and a set of $k$-mers $S$ of cardinality $n$, such that $|\const(S)| \geq 2n (1 - \epsilon) $.
\end{restatable}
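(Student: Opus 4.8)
The plan is to follow the same pigeonhole template used for Theorem~\ref{thm:generalgraphslb}: build a large family of linear \dbgs, all with the same number $n$ of \kmers and all sharing a common starting \kmer, such that any two distinct members disagree on the neighborhood of some \kmer they both contain. If this family has size $N$, then the range of \const restricted to inputs of size $n$ must have size at least $N$ (otherwise two distinct members collide under \const and a \neighb query on a shared-but-distinguishing \kmer returns the wrong answer, contradicting correctness). Since the outputs are bit strings, this forces $|\const(S)| \ge \lg N$ for some member $S$, and the whole theorem reduces to making $\lg N$ as close to $2n$ as we like.

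First I would set up the correspondence between linear \dbgs and strings: a set of $n$ \kmers forming a simple path is exactly the \kmer-set of a string of length $n+k-1$ whose $(k-1)$-mers are all distinct. So the natural candidate family is the set of all length-$(n+k-1)$ strings over $\Sigma$ with no repeated $(k-1)$-mer. By the counting lemma the paper cites (Lemma~\ref{lem:nb_strings_repeated_kmer}, adapting \citet{gagie2012}), the number of such strings is a $(1-o(1))$ fraction of all $4^{\,n+k-1}$ strings as $k$ grows relative to $n$; Lemma~\ref{lem:nb_strings_non_repeated_kmer} is then invoked to say this count can be pushed arbitrarily close to $4^n$ (equivalently $2^{2n}$) by choosing $n$ and $k$ appropriately. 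The remaining structural point is the ``common prefix'' restriction: I would fix one \kmer $w$ and keep only those strings that begin with $w$. Since fixing the first \kmer removes only a bounded factor (roughly $4^{-k}$, absorbed into the $\epsilon$), the restricted family still has size at least $2^{2n(1-\epsilon)}$ for suitable parameters.

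The distinguishing step is where the common-prefix trick pays off. Take two distinct strings $s \ne t$ in the family. Because they agree on the initial \kmer $w$ but are different strings, I would walk along both from $w$ and locate the first position where they diverge; at the last \kmer before the divergence, call it $u$, both $s$ and $t$ contain $u$ (they have read identical characters up to that point), yet the next character differs, so the out-neighbor of $u$ within $s$ differs from its out-neighbor within $t$. The no-duplicate-\kmer property is what guarantees this is clean — $u$ occurs exactly once in each string, so its neighborhood in the corresponding \dbg is unambiguous and genuinely differs between the two graphs. This furnishes exactly the shared \kmer with differing neighborhoods that the pigeonhole argument needs, the linear analogue of Lemma~\ref{lem:goodelement}.

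I expect the main obstacle to be the quantitative bookkeeping rather than any conceptual difficulty: one has to verify that the counting lemma's lower bound on no-duplicate strings, further cut down by fixing the starting \kmer, still exceeds $2^{2n(1-\epsilon)}$ for a valid $(n,k)$ pair within the paper's \emph{polylog} regime $k = O(\log^c n)$. Concretely I would choose $k$ large enough (as a function of $n$ and $\epsilon$) that the fraction of admissible strings is at least $1-\epsilon/2$ and the $4^{-k}$ loss from the prefix constraint is likewise negligible, then conclude $\lg N \ge 2n(1-\epsilon)$ and finish by the pigeonhole contradiction exactly as in the proof of Theorem~\ref{thm:generalgraphslb}. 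The structural distinguishing argument is essentially free once the common prefix is imposed; the care lies entirely in threading the two approximation lemmas together so the final bound lands at $2n(1-\epsilon)$.
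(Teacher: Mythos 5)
Your proposal is correct and follows essentially the same route as the paper: the same family of length-$(n+k-1)$ strings with no repeated $(k-1)$-mers sharing a fixed first $k$-mer (obtained by the same binning/averaging step inside Lemma~\ref{lem:nb_strings_non_repeated_kmer}), the same longest-common-prefix argument producing a shared $k$-mer whose out-neighbors differ, and the same pigeonhole conclusion. The only nuance worth noting is that the unambiguous out-neighbor comes from the uniqueness of that $k$-mer's $(k-1)$-suffix in the string (not merely from the $k$-mer occurring once), which the no-repeated-$(k-1)$-mer condition defining your family already guarantees.
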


Note that our naive membership data structure of $2(n+k-1)$ bits 
immediately implies a \nds of the same size.
Similarly, Theorem~\ref{thm:lineargraphslb}'s  lower bound of $2n$ bits on a \nds 
immediately implies the same bound for membership data structures.
In practice, $k$ is orders of magnitude less than $n$, 
and we view these results as saying that the space usage of membership and navigational 
data structures on linear \dbgs is essentially $2n$ and cannot be improved.

These results, together with Theorem~\ref{thm:generalgraphslb}, 
suggested that the potential of navigational data structures
may be dampened when the \dbg is linear-like in structure.
Intuitively, the advantage of a linear \dbg is that all the \kmers of a path collapse together onto one string
and require storing only one nucleotide per \kmer, except for the overhead of the first \kmer.
If the \dbg is not linear but can still be decomposed into a few paths, 
then we could still take advantage of each path while paying an overhead of only a single \kmer per path.
This in fact forms the basis of our algorithm in Section~\ref{sec:dbgfm}.

\subsection{Proof details}

\begin{restatable}{lem}{nbstringsrepeatedkmer}
\label{lem:nb_strings_repeated_kmer}
The number of DNA strings of length $m$ where each $k$-mer is seen only once is at least $4^m(1-{ m \choose 2 } 4^{-k})$.
\end{restatable}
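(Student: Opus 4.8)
The plan is to recast the counting problem probabilistically. Draw a string $s$ uniformly at random from $\Sigma^m$; since $|\Sigma|=4$, the number of strings in which every \kmer is distinct equals $4^m$ times the probability that $s$ has no repeated \kmer. So it suffices to show $\Pr[s \text{ has all \kmers distinct}] \ge 1 - \binom{m}{2}4^{-k}$, and I would obtain this by a union bound over all pairs of starting positions. There are $m-k+1$ windows of length $k$, hence at most $\binom{m}{2}$ unordered pairs $(i,j)$ with $i<j$; for each I want to bound by $4^{-k}$ the probability that the \kmer starting at position $i$ equals the \kmer starting at position $j$.

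The crux is bounding this collision probability uniformly by $4^{-k}$ in both the disjoint and the overlapping regimes. First I would dispatch the easy case $j - i \ge k$: the two windows occupy disjoint sets of positions, so the $k$ required character equalities are independent, each holding with probability $1/4$, giving exactly $4^{-k}$. The delicate case is $1 \le d := j-i \le k-1$, where the windows overlap and the equalities are correlated. Here I would observe that the \kmer at $i$ equals the \kmer at $j$ if and only if the length-$(k+d)$ substring $s_i \cdots s_{j+k-1}$ is periodic with period $d$ (the condition $s_{i+t}=s_{i+d+t}$ for $0\le t\le k-1$ is exactly this periodicity). A string of length $k+d$ with period $d$ is completely determined by its first $d$ symbols, so there are $4^d$ such windows out of $4^{k+d}$ equally likely ones, again yielding collision probability $4^{-k}$. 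Thus in every case the probability is at most (indeed equal to) $4^{-k}$.

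Putting these together, the union bound gives $\Pr[\exists\, i<j \text{ with equal \kmers}] \le \binom{m-k+1}{2}\,4^{-k} \le \binom{m}{2}\,4^{-k}$, so the probability of no repeated \kmer is at least $1 - \binom{m}{2}4^{-k}$, and multiplying by $4^m$ finishes the proof. Alternatively, since the statement attributes the idea to \citet{gagie2012}, one could cite the corresponding probabilistic estimate there for the probability that a random string contains a repeated substring and plug it in directly; I would still prefer to record the periodicity computation, as it is short and makes the bound self-contained.

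I expect the overlapping case to be the only real obstacle: it is tempting to assume independence and write $4^{-k}$ without justification, but the equalities genuinely interact. The periodicity reformulation is what lets me recover the clean $4^{-k}$ bound despite the correlation, and verifying that the window length is $k+d$ (so that the "first $d$ symbols" count $4^d$ is correct) is the one place where an off-by-one slip would break the argument. Everything else is a routine union bound.
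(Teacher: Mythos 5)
Your proof is correct and, at its core, is the same argument as the paper's: both hinge on summing over the at most $\binom{m}{2}$ pairs of starting positions and on the observation that in the overlapping case ($d=j-i<k$) a collision forces the length-$(k+d)$ window to be $d$-periodic, hence determined by $d$ free symbols, so each pair accounts for a $4^{-k}$ fraction of strings. The only difference is presentational: the paper deliberately casts this as a deterministic covering argument (a set $S$ of bad strings parameterized by $(i,j)$ and a length-$(m-k)$ string $t$, with the periodic $k$-mer $\overline{s[i\ldots j-1]}^k$ playing the role of your periodicity step), precisely to avoid the probabilistic phrasing of the cited prior work, whereas your version is that probabilistic phrasing — the two are identical after dividing by $4^m$.
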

\begin{proof}
This Lemma was expressed in a probabilistic setting in~\citet{gagie2012}, but we provide a deterministic proof here.
We define a set of strings $S$ and show that it contains all strings with at least one repeated $k$-mer. Let $\overline{s}^k$ be the string obtained by repeating the pattern $s$ as many times as needed to obtain a string of length exactly $k$, possibly truncating the last occurrence. Let 
\begin{align*}
S = \{ s \in \Sigma^m \,| &\exists~(i,j), \journal{1} \leq i < j \leq \journal{(m - k + 1)}, \\
&\exists~t,  |t|=(m-k), \\
&s[1\ldots \journal{j-1}] = t[1\ldots \journal{j-1}], \\
&s[\journal{j}\ldots j+k-1]=\overline{s[\journal{i}\ldots \journal{j-1}]}^k,  \\
&s[\journal{j+k}\ldots m]=t[\journal{j}\ldots m-k]\text{\,if\, \journal{$j < m-k+1$}}\}, 
\end{align*}
as illustrated in Figure~\ref{fig:lblin}.
Let $s'$ be a string of length $m$ which contains at least one repeated $k$-mer. Without loss of generality, there exists two starting positions $i<j$ of identical $k$-mers (\journal{$s'[j\ldots j+k-1]=s'[i\ldots i+k-1]$}). Setting $t$ to be the concatenation of $s'[1..\journal{j-1}]$ and $s'[\journal{j+k}\ldots m]$, it is clear that $s'$ is in $S$. 
Thus $S$ contains all strings of length $m$ having at least one repeated $k$-mer. 
Since there are \journal{${ {m-k+1} \choose 2 }$ choices for $(i,j)$, i.e. less than ${ m \choose 2 }$,} and $4^{m-k}$ choices for $t$, the cardinality of $S$ is at most ${m \choose 2 } 4^{m-k}$, which yields the result.
\end{proof}

\arxiv{

\begin{figure}[t]
\centering
\includegraphics[scale=0.3]{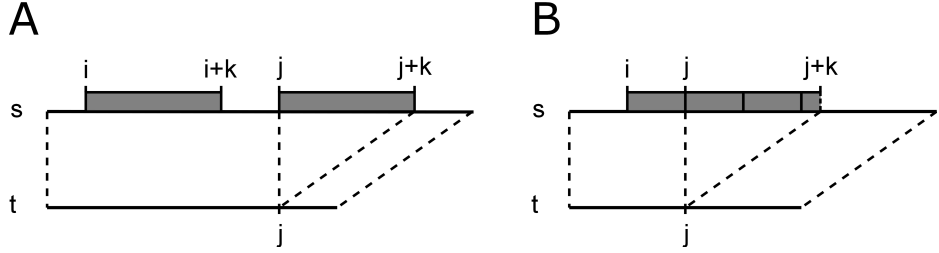}
\caption{
\journal{
Illustration of the proof of Lemma~\ref{lem:nb_strings_repeated_kmer}, showing the definition of the set $S$ of strings of length $m$ containing identical $k$-mers at positions $1 \leq i < j \leq (m-k+1)$. Panel A illustrates the case when $(j-i)\geq k$: the $k$-mers $s[i\ldots i+k-1]$ and $s[j\ldots j+k-1]$ do not overlap. When $(j-i) < k$ (Panel B), the $k$-mer $s[i\ldots i+k-1]$ (also equal to $s[j\ldots j+k-1]$) consists of repetitions of the string $s[i\ldots j-1]$.}
\label{fig:lblin}}
\end{figure}
}

\begin{restatable}{lem}{nbstringsnonrepeatedkmer}
\label{lem:nb_strings_non_repeated_kmer}
Given $0<\epsilon<1$, let $n=\lceil 3 \epsilon^{-1} \rceil$ and $k = \lceil 1+(2+\epsilon)\log_4(2n)\rceil$. The number of DNA strings of length $(n+k-1)$ which start with the same $k$-mer, and do not contain any repeated $(k-1)$-mer, is strictly greater than $4^{n(1-\epsilon)}$.
\end{restatable}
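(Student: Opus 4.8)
The plan is to combine the preceding counting lemma with a pigeonhole argument over prefixes. First I would invoke Lemma~\ref{lem:nb_strings_repeated_kmer}, applied to strings of length $n+k-1$ and counting occurrences of $(k-1)$-mers rather than $k$-mers (i.e.\ with the lemma's length parameter equal to $n+k-1$ and its substring length equal to $k-1$). This gives a lower bound of $4^{n+k-1}\bigl(1-Q\bigr)$ on the number of length-$(n+k-1)$ strings in which no $(k-1)$-mer is repeated, where $Q=\binom{n+k-1}{2}4^{-(k-1)}$. Every such string has a well-defined starting \kmer (its first $k$ characters), and there are at most $4^k$ possibilities for this prefix; so by the pigeonhole principle some fixed starting \kmer is shared by at least a $4^{-k}$ fraction of them. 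Hence there is a \kmer that begins at least $4^{n-1}(1-Q)$ of the counted strings, all of which avoid repeated $(k-1)$-mers, which is exactly the quantity to be bounded.

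It then remains to prove $4^{n-1}(1-Q)>4^{n(1-\epsilon)}$, i.e.\ $1-Q>4^{1-n\epsilon}$, and here the two parameter choices play complementary roles. The choice $n=\lceil 3\epsilon^{-1}\rceil$ forces $n\epsilon\ge 3$, so the right-hand side satisfies $4^{1-n\epsilon}\le 4^{-2}=\tfrac{1}{16}$; it therefore suffices to show $Q\le \tfrac12$. Using $\binom{n+k-1}{2}\le (n+k-1)^2/2$ together with the choice $k=\lceil 1+(2+\epsilon)\log_4(2n)\rceil$, which guarantees $k-1\ge (2+\epsilon)\log_4(2n)$ and hence $4^{k-1}\ge (2n)^{2+\epsilon}$, I would bound $Q\le (n+k-1)^2\big/\bigl(2(2n)^{2+\epsilon}\bigr)$, so that $Q\le\tfrac12$ reduces to the single clean inequality $(n+k-1)^2\le (2n)^{2+\epsilon}$.

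I expect this last inequality to be the main obstacle, because it requires controlling the additive term $k-1$ (which is only logarithmic in $n$) against the multiplicative slack $(2n)^{\epsilon/2}$, and the parameters $n$ and $\epsilon$ are coupled through $n\epsilon\ge 3$. The delicate regime is large $\epsilon$ with small $n$, where $k$ is comparable to $n$ and the naive estimate $k\le n+1$ fails. To handle it uniformly I would rewrite the target as $n+k-1\le (2n)^{1+\epsilon/2}=2n+2n\bigl((2n)^{\epsilon/2}-1\bigr)$ and apply the convexity estimate $(2n)^{\epsilon/2}-1\ge \tfrac{\epsilon}{2}\ln(2n)$, which yields $2n\bigl((2n)^{\epsilon/2}-1\bigr)\ge n\epsilon\ln(2n)\ge 3\ln(2n)$ from $n\epsilon\ge 3$. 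Combined with $k-1\le (2+\epsilon)\log_4(2n)+1$ and $n\ge 4$ (a consequence of $\epsilon<1$), a short comparison of the logarithmic terms then closes the inequality. The remaining steps are routine bookkeeping.
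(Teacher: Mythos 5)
Your proposal is correct and follows essentially the same route as the paper's proof: apply Lemma~\ref{lem:nb_strings_repeated_kmer} with string length $n+k-1$ and substring length $k-1$, pigeonhole over the at most $4^k$ possible starting $k$-mers to obtain at least $4^{n-1}(1-Q)$ strings sharing a prefix, and then verify the numerical inequality from $n\epsilon\ge 3$ and the choice of $k$. The only divergence is in the final bookkeeping: the paper disposes of $Q=\binom{n+k-1}{2}4^{-(k-1)}$ by asserting $k<n$, hence $n+k-1<2n$, and concluding $Q<\tfrac{1}{2}(n+k-1)^{-\epsilon}$, whereas you prove $(n+k-1)^2\le(2n)^{2+\epsilon}$ directly via the convexity estimate $(2n)^{\epsilon/2}\ge 1+\tfrac{\epsilon}{2}\ln(2n)$. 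Your extra care here is not wasted: the paper's claim $k<n$ is in fact false for some admissible parameters (e.g.\ $\epsilon=3/4$ gives $n=4$ and $k=6$), so its stated justification does not cover the large-$\epsilon$, small-$n$ regime you flagged, even though the inequality it ultimately needs still happens to hold there. Your argument, which uses only $n\epsilon\ge 3$ and $n\ge 4$, closes that gap cleanly; the remaining steps check out.
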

\begin{proof}
Note that $k<n$, thus $k > (1+(2+\epsilon)\log_4(n+k-1))$ and $4^{-k+1} < (n+k-1)^{(-2-\epsilon)}$.
 Using Lemma~\ref{lem:nb_strings_repeated_kmer}, there are at least $(4^{n+k-1}(1-{n+k-1 \choose 2}4^{-k+1})) > (4^{n+k-1}(1-\frac{1}{2(n+k-1)^\epsilon}))$ strings of length $(n+k-1)$ where each $(k-1)$-mer is unique. Thus, each string has exactly $n$ $k$-mers that are all distinct. By binning these strings with respect to their first $k$-mer, there exists a $k$-mer $k_0$ such that there are at least $4^{n-1}(1-\frac{1}{2(n+k-1)^\epsilon})$ strings starting with $k_0$, which do not contain any repeated $(k-1)$-mer. The following inequalities hold: $4^{-1} > 4^{-n\epsilon/2}$ and $(1-\frac{1}{2(n+k-1)^\epsilon}) > \frac{1}{2} > 4^{-n\epsilon/2}$. Thus, $4^{n-1}(1-\frac{1}{2(n+k-1)^\epsilon}) > 4^{n(1-\epsilon)}$.

\end{proof}

\begin{lem}
    \label{lem:different_strings}
	Two different strings of length $(n+k-1)$ starting with the same $k$-mer and not containing any repeated $(k-1)$-mer correspond to two different linear de Bruijn graphs.
\end{lem}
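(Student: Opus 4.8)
The plan is to prove the contrapositive of the lemma: if two strings $s_1, s_2$ of length $n+k-1$ share the same starting $k$-mer, contain no repeated $(k-1)$-mer, and correspond to the \emph{same} linear de Bruijn graph, then in fact $s_1 = s_2$. Since under the node-centric definition a string's de Bruijn graph is exactly its set of $k$-mers, two strings define the same linear de Bruijn graph precisely when they have identical $k$-mer sets; thus it suffices to show that a string with no repeated $(k-1)$-mer is uniquely determined by its set of $k$-mers together with its starting $k$-mer.

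The key fact I would establish first is a \emph{unique successor} property: in a string with no repeated $(k-1)$-mer, every $k$-mer $u$ in its set $S$ satisfies $|\extf(u) \cap S| \le 1$. To see this, let $w$ be the $(k-1)$-suffix of $u$. Any out-neighbor $v \in \extf(u) \cap S$ has $w$ as its $(k-1)$-prefix, and since $S$ is precisely the set of $k$-mers occurring in the string, $v$ occurs at some position, so $w$ occurs there as the prefix of $v$. But the hypothesis forces $w$ to occur exactly once in the whole string, so any two such out-neighbors must occupy the same position and therefore coincide. A symmetric argument using $(k-1)$-prefixes gives $|\extr(u) \cap S| \le 1$.

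With this in hand, reconstruction from the common starting $k$-mer $k_0$ is forced. Starting at $k_0$, the current $k$-mer has at most one out-neighbor in $S$ at every step, so the ordered sequence of $k$-mers obtained by repeatedly following the unique out-edge depends only on $S$ and $k_0$, and hence so does the string it spells. Applying this deterministic reconstruction to both $s_1$ and $s_2$, which by assumption share the same $S$ and the same $k_0$, produces the same string, so $s_1 = s_2$, giving the contrapositive.

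The main obstacle is the unique-successor argument: one must carefully convert the combinatorial statement ``$w$ appears at most once in the string'' into the graph-theoretic statement ``$u$ has at most one out-neighbor in $S$.'' Once that link is made precise, the greedy reconstruction and the resulting injectivity follow immediately, and the remaining steps are routine.
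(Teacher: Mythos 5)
Your proof is correct, but it is organized differently from the paper's. The paper argues directly by contradiction at the first point of divergence: if $s_1 \ne s_2$ had the same $k$-mer set, take the smallest $i$ with $s_1[i\ldots i+k-2]=s_2[i\ldots i+k-2]$ but $s_1[i+k-1]\ne s_2[i+k-1]$; then the $k$-mer $s_1[i\ldots i+k-1]$ must occur in $s_2$ at some position $j\ne i$, which forces the $(k-1)$-mer $s_2[i\ldots i+k-2]=s_2[j\ldots j+k-2]$ to be repeated in $s_2$ --- a contradiction. You instead first isolate a structural fact (the unique-successor property, i.e.\ that every $k$-mer has out-degree at most one in $S$, precisely because its $(k-1)$-suffix occurs only once in the string) and then deduce the lemma by deterministic reconstruction of the string from $S$ and the shared starting $k$-mer. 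Both arguments rest on the same underlying observation that the absence of repeated $(k-1)$-mers determines each $k$-mer's successor; the paper's version is shorter and purely local, while yours is slightly longer but makes explicit the reason the graph is a simple path, which connects more directly to the paper's definition of a linear \dbg and to the later use of such graphs. One small point worth stating explicitly in your write-up: the reconstruction terminates after exactly $n$ steps because the last $k$-mer has no out-neighbor in $S$ (its $(k-1)$-suffix occurs only at the final position, so it cannot be the prefix of any $k$-mer of the string); with that noted, the deterministic walk spells the entire string and the injectivity follows as you claim.
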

\begin{proof}
For two different strings $s_1$ and $s_2$ of length $(n+k-1)$, which start with the same $k$-mer and do not contain any repeated $(k-1)$-mer, observe that their sets of $k$-mers cannot be identical. Suppose they were, and consider the smallest integer $i$ such that $s_1[i\ldots i+k-2] = s_2[i\ldots i+k-2]$ and $s_1[i\journal{+k-1}]\neq s_2[i\journal{+k-1}]$. The $k$-mer $s_1[i\ldots i+k-1]$ appears in $s_2$, at some position $j\neq i$. Then $s_2[i\ldots i+k-2]$ and $s_2[j\ldots j+k-2]$ are identical $(k-1)$-mers in $s_2$, which is a contradiction. Thus, $s_1$ and $s_2$ correspond to different sets of $k$-mers, and therefore correspond to two different linear de Bruijn graphs.
\end{proof}

\lineargraphslb*
\begin{proof}
Assume for the sake of contradiction that for all linear de Bruijn graphs, the output of \const requires less than $2(1-\epsilon)$ bits per $k$-mer. 
Thus for a fixed $k$-mer length, the number of outputs $\const(S)$ for sets of $k$-mers $S$ of size $n$ is no more than $2^{2n(1-\epsilon)}$. 
Lemma~\ref{lem:nb_strings_non_repeated_kmer} provides values $(k,n,k_0)$, for which there are more strings starting with a $k$-mer $k_0$ and containing exactly $n$ $k$-mers with no duplicate $(k-1)$-mers (strictly more than $2^{2n(1-\epsilon)}$) than the number of outputs $\const(S)$ for $n$ $k$-mers. 

By the pigeonhole principle, there exists a navigational data structure constructor $\const(S)$ 
that takes the same values on two different strings $s_1$ and $s_2$ that start with the same $k$-mer $k_0$ and do not contain repeated $(k-1)$-mer. 
By Lemma~\ref{lem:different_strings}, $\const(S)$ takes the same values on two different sets of $k$-mers $S_1$ and $S_2$ of cardinality $n$.
Let $p$ be the length of longest prefix common to both strings. 
Let $k_1$ be the $k$-mer at position $(p-k+1)$ in $s_1$. 
Note that $k_1$ is also the $k$-mer that starts at position $(p-k+1)$ in $s_2$. 
By construction of $s_1$ and $s_2$, $k_1$ does not appear anywhere else in $s_1$ or $s_2$. 
Moreover, the $k$-mer at position $(p-k)$ in $s_1$ is different to the $k$-mer at position $(p-k)$ in $s_2$. 
In a linear de Bruijn graph corresponding to a string where no $(k-1)$-mer is repeated, each node has at most one out-neighbor. 
Thus, the out-neighbor of $k_1$ in the de Bruijn graph of $S_1$ is different to the out-neighbor of $k_1$ in the de Bruijn graph of $S_2$, i.e. 
$\neighb(\const(S_1),k_1) \neq \neighb(\const(S_2),k_1)$, which is a contradiction. 
\end{proof}

\section{Data structure for representing a de Bruijn graph in small space (\dbgfm) \label{sec:dbgfm}}
Recall that a simple path is a path where all the internal nodes have in- and out-degree of 1.
A \emph{maximal} simple path is one that cannot be extended in either direction.
\revision{It can be shown that there exists a unique set of edge-disjoint maximal simple paths that completely covers the dBG,}
and each path $p$ with $|p|$ nodes can be represented compactly 
as a string of length $k + |p| - 1$.
We can thus represent a \dbg $S$ containing $n$ \kmers 
as a set of strings corresponding to 
the 
maximal simple paths,
denoted by $sp_k(S)$.
Let $c_k(S)=|sp_k(S)|$ be the number of maximal simple paths, and
let $s$ to be the concatenation of all members of $sp_k(S)$ in arbitrary order, 
separating each element by a symbol not in $\Sigma$ (e.g. \$).
Using the same idea as in Section~\ref{sec:lblin}, 
we can represent a \dbg using $s$ in 
\journal{
\begin{align*}
2|s| &= \sum_{p \in sp_k(S)} 2(|p| + k)
\\
&\leq 2(n + (k + 2)c_k(S))\text{ bits.}\end{align*}}
However, this representation requires an inefficient linear scan in order to answer a membership query.

We propose the use of the FM-index of $s$ to speed up query time at the cost of more space.
The FM-index~\citep{fmindex} is a full-text index which is based on the Burrows-Wheeler transform~\citep{bwt,bwttextbook} developed for text compression.
It has previously been used to map reads to a reference genome~\citep{bwa,bowtie2,soap2},
perform \emph{de novo} assembly~\citep{simpson10,simpson12,fermi},
and represent the \dbg for the purpose of exploring
genome characteristics prior to \emph{de novo} assembly~\citep{simpson13}.

The implementation of the FM-index  
stores the Huffman-coded Burrows-Wheeler transform of $s$
along with two associated arrays and some $o(1)$ space overhead.
Our software, called \dbgfm\footnote{\revision{Source code available at \url{http://github.com/jts/dbgfm}}}, follows the implementation of~\citet{fmindex}, and
we refer the reader there for a more thorough description of how the FM-index works.
Here, we will only state its most relevant properties.
It allows us to count the number of occurrences of an arbitrary pattern $q$ in $s$ in $O(|q|)$ time.
In the context of \dbgs, we can test for the membership of a $k$-mer in $S$ in time $O(k)$.
Two sampling parameters ($r_1$ and $r_2$)
trade-off the size of the associated arrays with the query time.
For constructing the FM-index, there are external memory algorithms that do not use
more intermediate memory than the size of the final output~\citep{fmindex}.
The space usage of \dbgfm is
\journal{\begin{align*}
&\phantom{{}\leq{}}
|s|(H_0(s) + 96r_1^{-1} + 384r_2^{-1}) + o(1) 
\\
&\leq
n(H_0(s) + 96r_1^{-1} + 384r_2^{-1})
(1 + \frac{k + 2}{n}c_k(S)) + o(1)\text{ bits,}
\end{align*}
where $H_0$ is the zeroth order entropy~\citep{bwttextbook}:
$H_0(s)  = - \sum_{c\in \Sigma \cup \{\$\}} f_c \lg f_c$, and
$f_c$ is the frequency of character $c$ in $s$.}
Note that for our five character alphabet
$H_0$ is at most $\lg 5$.

As the value of $c_k(S)$ approaches one, $f_{\$}$ approaches $0$ and hence
the upper bound on $H_0$ approaches $2$.
If we further set the sampling parameters  to be inversely proportional to $n$,
the space utilization approaches at most $2n$ bits.
However, this results in impractical query time and, more realistically, 
typical values for the  sampling parameters are $r_1 = 256$ and $r_2 = 16384$,
resulting in at most $2.32n$ bits as $c_k(S)$ approaches 1.
For the error-free human genome with $k=55$, there are $c_{55}(S) = 12.9 \cdot 10^6$ maximal simple paths
and $n = 2.7 \cdot 10^9$ \kmers.
The resulting $H_0(S)$ is at most $2.03$, and
the space utilization is at most 2.43 bits per \kmer.



An additional benefit of the FM-index is that it allows constant-time access to the in-neighbors of nodes ---
every edge is part of a simple path, so we can query the FM-index for the symbols preceding a \kmer $x$.
Thus, \dbgfm is a membership data structure but supports faster in-neighbor queries.
However we note that this is not always the case when reverse complementarity is taken into account.

We wanted to demonstrate how the \dbgfm data structure could be incorporated into an existing assembler. 
We chose ABySS, a popular \emph{de novo} sequence assembly tool used in large-scale genomic projects~\citep{abyss}.
In modifying ABySS to look up \kmer membership using \dbgfm, 
we replace its hash table with a simple array.
\dbgfm associates each \kmer with an integer called a suffix array index~(\sai),
which could be used to index the simple array.
However,
some of the \kmers of the \dbgfm string include a simple path separator symbol, \$,
and, hence, not every \sai corresponds to a node in the \dbg.
We therefore use 
a rank/select data structure \citep{gonzalez2005practical} to translate 
the \sais into a contiguous enumeration of the nodes,
which we then use to index our simple array.
We also modified the graph traversal strategy in order to maximize the number
of in-neighborhood queries, which are more efficient than out-neighborhood or 
membership queries.

\section{Algorithm to enumerate the maximal simple paths of a de Bruijn graph in low memory (\bcalm)\label{sec:bcalm}}
The \dbgfm data structure of Section~\ref{sec:dbgfm} can construct and represent a 
\dbg in low space from the set of maximal simple paths ($sp_k(S)$).
However, constructing the paths (called compaction) generally requires loading the \kmers into memory, 
which would require large intermediate memory usage.
Because our goal is a data structure that is low-memory during both construction and the final
output,
we developed an algorithm for compacting de Bruijn graphs in low-memory (\bcalm) \footnote{\revision{Source code available at \url{http://github.com/Malfoy/bcalm}}}.

\subsection{Algorithm description}
Our algorithm is based on the idea of minimizers, first introduced by~\citet{roberts04,roberts04b} 
and later used by~\citet{yanglipaper}.
The $l$-minimizer of a string $u$ is the smallest $l$-mer that is a sub-string of $u$ 
(we assume there is a total ordering of the strings, e.g. lexicographical).
We define $\lmin(u)$ (respectively, $\rmin(u)$) to be the $\ell$-minimizer of the 
$(k-1)$-prefix (respectively suffix) of $u$.
We refer to these as the left and right minimizers of $u$, respectively.
We use minimizers because of the following observation:
\begin{restatable}{obs}{rminequalslmin}
\label{obs:rminequalslmin}
For two strings $u$ and $v$, if $u\rightarrow v$, then $\rmin(u) = \lmin(v)$.
\end{restatable}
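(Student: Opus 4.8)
The plan is to observe that the two minimizers named in the statement are computed from one and the same string, so they cannot possibly differ. The whole argument is just a matter of unfolding two definitions and matching them up.

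First I would unfold the overlap relation. By the definition of $\rightarrow$ in the Preliminaries, $u \rightarrow v$ means that $u$ and $v$ have an exact suffix--prefix overlap of length $k-1$; that is, the length-$(k-1)$ suffix of $u$ coincides, symbol by symbol, with the length-$(k-1)$ prefix of $v$. I would give this shared $(k-1)$-mer a name, say $w$, so that the remainder of the argument can refer to a single object rather than to ``the suffix of $u$'' and ``the prefix of $v$'' separately.

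Next I would unfold the two minimizer definitions against $w$. By definition $\rmin(u)$ is the $\ell$-minimizer of the $(k-1)$-suffix of $u$, which is exactly $w$, and $\lmin(v)$ is the $\ell$-minimizer of the $(k-1)$-prefix of $v$, which is again exactly $w$. Hence both $\rmin(u)$ and $\lmin(v)$ equal the $\ell$-minimizer of the single string $w$. Since the $\ell$-minimizer of a fixed string is uniquely determined --- it is the smallest $\ell$-mer substring under the assumed total order on strings --- the two quantities must be equal, giving $\rmin(u) = \lmin(v)$.

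There is no genuine obstacle here: the claim reduces entirely to the fact that the minimizer is a deterministic function of its input string, and the overlap relation guarantees that the two inputs are the identical string $w$. The only point I would take care to state explicitly is that the suffix of $u$ and the prefix of $v$ being compared both have length exactly $k-1$, matching the length of the string on which $\rmin$ and $\lmin$ operate, so no truncation or length mismatch can creep in and the substitution of $w$ for each is literal.
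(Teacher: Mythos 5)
Your argument is correct and coincides with the paper's own (implicit) justification: the paper states this as an Observation without proof precisely because, as you note, $u\rightarrow v$ forces the $(k-1)$-suffix of $u$ and the $(k-1)$-prefix of $v$ to be the same string, so $\rmin(u)$ and $\lmin(v)$ are the $\ell$-minimizer of one and the same $(k-1)$-mer. Your explicit check that both minimizers are computed on strings of length exactly $k-1$ is the right detail to pin down, and nothing further is needed.
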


We will first need some definitions.
Given a set of strings $S$, we say that $(u, v) \in S^2$ are \emph{compactable} in a set $V \subseteq S$ if 
$u\rightarrow v$ and, 
$\forall w \in V$, if $w\rightarrow v$ then $w=u$ and if $u\rightarrow w$ then $w=v$. 
The compaction operation is defined on a pair of compactable strings $u,v$ in  $S$. 
It replaces $u$ and $v$ by a single string
$w=u \cdot v[k+1\ldots |v|]$ where '$\cdot$' is the string concatenation operator.
Two strings $(u,v)$ are \emph{$m$-compactable} in $V$ if they are compactable in $V$ and 
if $m=\rmin(u) = \lmin(v)$.
The $m$-compaction of a set $V$ is obtained from $V$ by applying the compaction operation as much as possible in any order to all pairs of strings that are $m$-compactable in $V$. 
It is easy to show that the order in which strings are compacted does not lead to different $m$-compactions.
Compaction is a useful notion because a simple way to obtain the simple paths
is to greedily perform compaction as long as possible.
In the following analysis, we identify a string $u$ with the path 
$p = u[1\ldots k] \rightarrow u[2\ldots (k+1)] \rightarrow \ldots \rightarrow u[(|u|-k+1)\ldots |u|]$ of all the $k$-mers of $u$ in consecutive order.

\arxiv{
\begin{algorithm}[p!]
\begin{algorithmic}[1]
\STATE \textbf{Input:} Set of k-mers $S$, minimizer size $\ell<k$
\STATE \textbf{Output:} Sequences of all simple paths in the de Bruijn graph of $S$
 \STATE Perform a linear scan of $S$ to get the frequency of all $l$-mers (in memory)\label{algstep:lmercounting}
 \STATE Define the ordering of the minimizers, given by their frequency in $S$\label{algstep:minimizersorder}
 \STATE Partition $S$ into files $F_m$ based on the minimizer $m$ of each $k$-mer\label{algstep:partition}
 \FOR{each file $F_m$ in increasing order of $m$\;\label{algstep:for-each-file}}
  \STATE $C_m \leftarrow$ $m$-compaction of $F_m$ (performed in memory)\;\label{algstep:compact-graph}
  \FOR{each string $u$ of $C_m$}
    \STATE $B_{min} \leftarrow $ min(\lmin($u$),\rmin($u$))\;\label{algstep:bmin}
    \STATE $B_{max} \leftarrow $ max(\lmin($u$),\rmin($u$))\;
    \IF{$B_{min} \le m \text{ and } B_{max} \le m$}
        \STATE Output $u$\;\label{algstep:output}
    \ELSIF{$B_{min} \le m \text{ and } B_{max} > m$}
        \STATE Write $u$ to $F_{B_{max}}$\;
    \ELSIF{$B_{min} > m \text{ and } B_{max} > m$}
        \STATE Write $u$ to $F_{B_{min}}$\;\label{algstep:movemin}
    \ENDIF
   \ENDFOR
   \STATE Delete $F_m$
  \ENDFOR
  \end{algorithmic}
  \caption{\bcalm: Enumeration of all maximal simple paths in the \dbg\label{alg:bcalm}}
\end{algorithm}
}

We now give a high-level overview of Algorithm~\ref{alg:bcalm}. 
The input is a set of \kmers $S$ and a parameter $\ell < k$ which is the minimizer size.
For each $m \in \Sigma^\ell$, we maintain a file $F_m$ in external memory.
Each file contains a set of strings, and we will later prove that at any point during the execution, 
each string is a sub-path of a simple path (Lemma~\ref{lem:correctpath}).
Moreover, we will show that at any point of execution, the multi-set of \kmers appearing in the strings and in the output does not change and is always $S$ (Property~\ref{prop:kmersdontchange}).

At line~\ref{algstep:partition}, we partition the \kmers into the files, according to their $\ell$-minimizers.
Next, each of the files is processed, starting from the file of the smallest minimizer in increasing order
(line~\ref{algstep:for-each-file}).
For each file, we load the strings into memory and $m$-compact them
(line~\ref{algstep:compact-graph}),
with the idea being that the size of each of the files is kept small enough so that memory usage is low.
The result of the compaction is a new set of strings, each of which is then either written to one of the files 
that has not been yet processed or output as a simple path.
\journal{A step-by-step execution of Algorithm~\ref{alg:bcalm} on a toy example is given in Figure~\ref{fig:bcalm_example}.}

\arxiv{

\begin{figure}
\centering
\includegraphics[width=.8\textwidth]{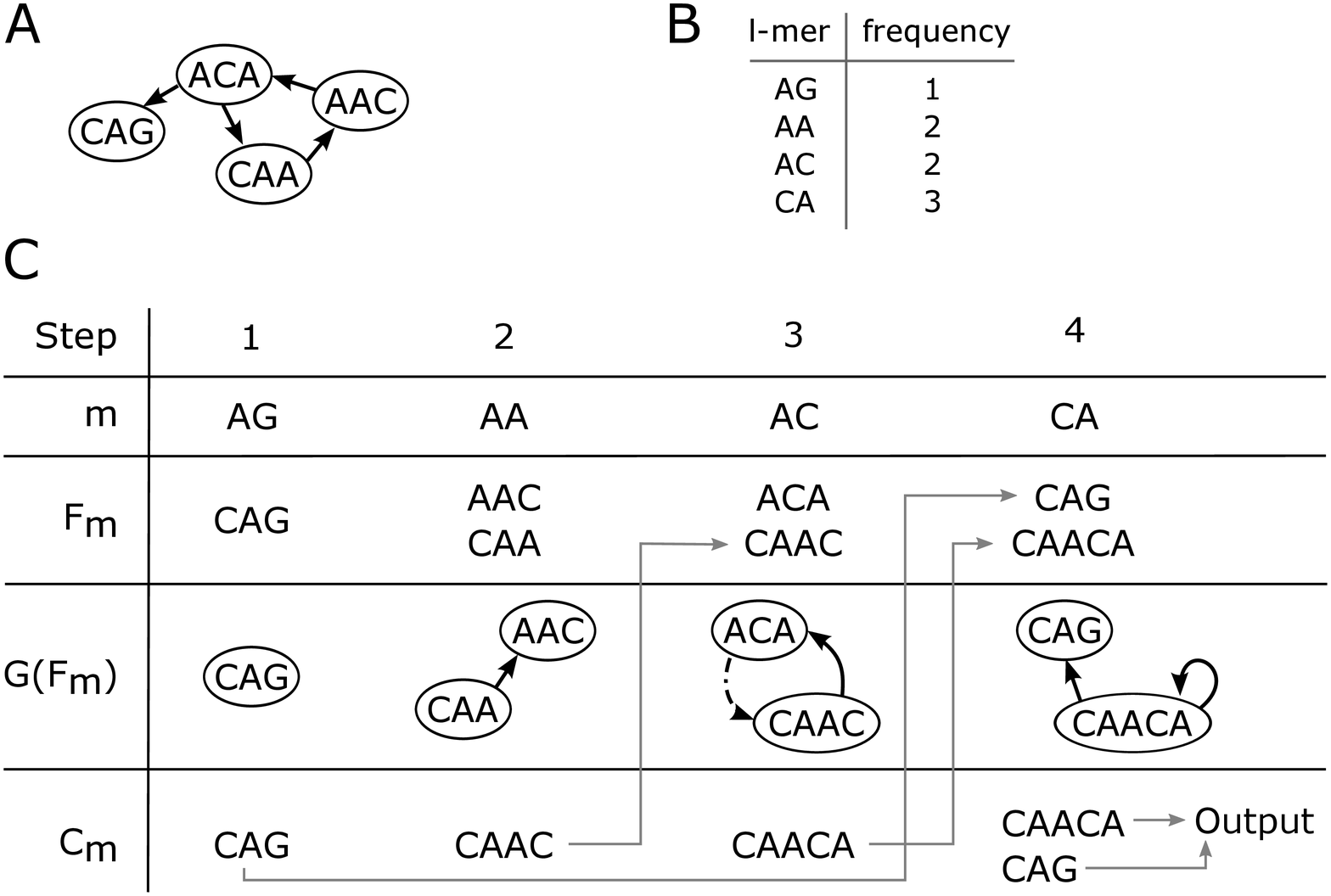}
\caption{
\journal{
An execution of Algorithm~\ref{alg:bcalm} on a toy example. (Panel A) The set $S=\{\text{ACA}, \text{AAC}, \text{CAA}, \text{CAG}\}$ of $k$-mers is represented as a de Bruijn graph ($k=3$). (Panel B) The frequencies of each $l$-mer present in the $k$-mers is given in increasing order ($l = 2$). (Panel C) Steps of the algorithm; initially, each set $F_m$ contains all $k$-mers having minimizer $m$. For each minimizer $m$ (in order of frequency), perform the $m$-compaction of $F_m$ and store the result in $C_m$; the grey arrows indicate where each element of $C_m$ is written to, either the file of a higher minimizer or the output. The row $G(F_m)$ shows a graph where solid arrows indicate $m$-compactable pairs of strings in $F_m$. The dash-dot arrow in Step 3 indicates that the two strings are compactable in $F_m$ but not $m$-compactable; in fact, they are not compactable in $S$.}
\label{fig:bcalm_example}}
\end{figure}
}

The rule of choosing which file to write to is based on the left and right minimizers of the string.
If both minimizers are no more than $m$, then the string is output as a simple path
(line~\ref{algstep:output}).
Otherwise, we identify $m'$, the smallest of the two minimizers that is bigger than $m$,
and write the string to the file $F_{m'}$.
Finally, the file $F_m$ is discarded, and the next file is processed.
We will show that the rule for placing the strings into the files 
ensures that as each file $F_m$ is processed (line~\ref{algstep:for-each-file}),
it will contain every \kmer that has $m$ as a minimizer (Lemma~\ref{lem:everykmerinfm}).
We can then use this to prove the correctness of the algorithm (Theorem~\ref{thm:bcalmthm}).

There are several implementation details that make the algorithm practical.
First, reverse complements are supported in the natural way by identifying 
each \kmer with its reverse complement and letting the minimizer be the smallest $l$-mer
in both of them.
Second, we avoid creating $4^l$ files, which may be more than the file system supports.
Instead, we use virtual files and group them together into a smaller number of 
physical files. This allowed us to use $l=10$ in our experiments.
Third, when we load a file from disk (line~\ref{algstep:compact-graph}) 
we only load the first and last \kmer of each string, since the middle part is never used by the 
compaction algorithm. 
We store the middle part in an auxiliary file and use a pointer to keep track of it within the strings 
in the $F_m$ files.

Consequently, the algorithm memory usage depends on the number of strings in each file $F_m$, but not on the total size of those files. For a fixed input $S$, the number of strings in a file $F_m$ depends on the minimizer length $l$ and the ordering of minimizers. When $l$ increases, the number of $(k-1)$-mers in $S$ that have the same minimizer decreases. Thus, increasing $l$ yields less strings per file, which decreases the memory usage. We also realized that, when highly-repeated $l$-mers are less likely to be chosen as minimizers, the sequences are more evenly distributed among files. 
We therefore perform in-memory \lmer counting
(line~\ref{algstep:lmercounting}) to obtain a sorted frequency table of all \lmers. 
This step requires an array of $64|\Sigma|^l$ bits to store the count of each $l$-mer in $64$ bits,
which is negligible memory overhead for small values of $\ell$ ($8$ MB for $l=10$). 
Each $l$-mer is then mapped to its rank in the frequency array, to create a total ordering of minimizers (line~\ref{algstep:minimizersorder}).
Our experiments showed a drastic improvement over lexicographic ordering (results in Section~\ref{sec:results}).

\subsection{Proof of correctness}
In this subsection we give a formal proof of the correctness of BCALM.
\begin{restatable}{prop}{kmersdontchange}
\label{prop:kmersdontchange}
At any point of execution after line~\ref{algstep:partition}, 
the multi set of \kmers present in the files and in the output is $S$.
\end{restatable}
\begin{proof}
We prove by induction.
It is trivially true after the partition step.
In general, note that the compaction operation preserves the multi set of \kmers.
Because the only way the strings are ever changed is through compaction, 
the property follows.
\end{proof}

\begin{restatable}{lem}{everykmerinfm}
For each minimizer $m$, for each $k$-mer $u$ in $S$ such that $\lmin(u)=m$ (resp. $\rmin(u)=m$), $u$ is the left-most (resp. right-most) $k$-mer of a string in $F_m$ at the time $F_m$ is processed.
\label{lem:everykmerinfm}
\end{restatable}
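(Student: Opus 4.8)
The plan is to prove the statement for $\lmin$; the statement for $\rmin$ then follows by the mirror-image argument, exchanging the roles of left and right, of prefixes and suffixes, and of predecessors and successors. First I would record the elementary fact that the $(k-1)$-prefix of any string $w$ appearing in a file coincides with the $(k-1)$-prefix of its left-most $k$-mer, so that $\lmin(w)$ equals the left minimizer of that left-most $k$-mer. The claim is therefore equivalent to the assertion that, when $F_m$ is processed, $u$ occurs in $F_m$ as the left-most $k$-mer of some string (which then automatically has $\lmin = m$).

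Fix a $k$-mer $u$ with $\lmin(u)=m$. By Property~\ref{prop:kmersdontchange}, $u$ occurs exactly once among all files and the output at every moment after line~\ref{algstep:partition}, so at each step there is a unique string containing $u$, and I would track this string through the execution. The key structural observation is that $u$ can cease to be the left-most $k$-mer of its string only by being compacted with a predecessor on its left, and such a compaction, carried out while processing some file $F_b$, is by definition a $b$-compaction requiring $b=\lmin(u)=m$. Hence during the processing of every file $F_b$ with $b<m$ the string containing $u$ is never extended to the left: $u$ stays its left-most $k$-mer, and consequently that string always carries $\lmin = m$ as one of its two minimizers.

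It remains to show that the string containing $u$ reaches $F_m$ before $F_m$ is processed. At line~\ref{algstep:partition}, $u$ is placed in the file indexed by the minimizer of $u$, which equals $\min(\lmin(u),\rmin(u))\le m$; if this index is $m$ we are already done, since no string residing in a not-yet-processed file is altered before that file's turn. Otherwise I would use a monotonicity argument over the files, which are processed in increasing order. Whenever the string containing $u$ sits in a file $F_b$ with $b<m$ and $F_b$ is processed, after the $b$-compaction (line~\ref{algstep:compact-graph}) the resulting string $w'$ still has $u$ as its left-most $k$-mer, hence $\lmin(w')=m$. Applying the placement rule (lines~\ref{algstep:bmin}--\ref{algstep:movemin}): $w'$ is never output, because output (line~\ref{algstep:output}) requires both minimizers to be $\le b$ whereas $\lmin(w')=m>b$; moreover $w'$ is written to some $F_c$ with $b<c$; and finally $c\le m$, since one of the two minimizers of $w'$ is exactly $m$, so the smallest minimizer exceeding $b$ cannot be larger than $m$. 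Thus each time the string is handled its file index strictly increases yet never exceeds $m$, so after finitely many steps it is written into $F_m$, where it stays untouched until $F_m$ is processed.

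The step I expect to require the most care is the placement-rule case analysis of the previous paragraph: one must verify simultaneously that the target index strictly increases (so the process terminates and never stalls) and that it never overshoots $m$ (so the string is not dumped into some $F_c$ with $c>m$, where it could be left-compacted and lose $u$ as its left-most $k$-mer before ever reaching $F_m$). Both facts rest on the invariant $\lmin(w')=m$, which is itself a consequence of the observation that no left-compaction of $u$ can occur before $F_m$ is processed; keeping these two dependencies in the correct logical order is the crux of the argument.
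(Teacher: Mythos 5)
Your proof is correct and takes essentially the same route as the paper's: both track the (unique, by Property~\ref{prop:kmersdontchange}) string containing $u$ through files of strictly increasing index, note that $u$ stays at the relevant extremity because dislodging it would require an $m$-compaction that cannot occur before $F_m$ is processed, and use the placement rule to show the string is never output early and never sent to an index beyond $m$. The only organizational difference is that the paper phrases this as an induction on the minimizer order while you argue directly by monotonicity, and your write-up makes the extremity-preservation step explicit where the paper leaves it terse.
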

\begin{proof}
We prove this by induction on $m$. Let $m_0$ be the smallest minimizer. All $k$-mers that have $m_0$ as a left or right minimizer are strings in $F_{m_0}$, thus the base case is true. Let $m$ be a minimizer and $u$ be a $k$-mer such that $\lmin(u)=m$ or $\rmin(u)=m$, and assume that the induction hypothesis holds for all smaller minimizers. If $\min(\lmin(u),\rmin(u))=m$, then $u$ is a string in $F_m$ after execution of line~\ref{algstep:partition}. Else, without loss of generality, assume that $m = \rmin(u) > \lmin(u)$. Then, after line~\ref{algstep:partition}, $u$ is a string in $F_{\lmin(u)}$. Let $F_{m^1},\ldots,F_{m^t}$ be all the files, in increasing order of the minimizers, which have a simple path containing $u$ before the maximal-length simple path containing $u$ is outputted by the algorithm. Let $m^i$ be the largest of these minimizers strictly smaller than $m$. By the induction hypothesis and Property~\ref{prop:kmersdontchange}, $u$ is at the right extremity of a unique string $s_u$ in $F_{m^i}$. After the $m^i$-compactions, since $m  = \rmin(s_u) > m^i$, $s_u$ does not go into the output. It is thus written to the next larger minimizer. Since $m = \rmin(u) \leq m^{i+1}$, then it must be that $m^{i+1} = m$, and $s_u$ is written to $F_m$, which completes the induction.
\end{proof}

\begin{restatable}{lem}{correctpath}
\label{lem:correctpath}
In Algorithm~\ref{alg:bcalm}, at any point during execution, each string in $F_m$ corresponds to a sub-path of a maximal-length simple path.
\end{restatable}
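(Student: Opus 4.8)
The plan is to prove Lemma~\ref{lem:correctpath} by induction on the execution of the algorithm, showing that the invariant ``every string in every file corresponds to a sub-path of a maximal-length simple path of the de Bruijn graph of $S$'' is preserved by every operation that modifies the files. The only two events that change the contents of the files are (i) the initial partition at line~\ref{algstep:partition}, and (ii) the compaction operations performed inside the $m$-compaction step at line~\ref{algstep:compact-graph} (moving a string from one file to another does not alter the string itself, so it cannot break the invariant). Thus the induction has a clean structure: establish the base case after partitioning, then argue that a single compaction operation preserves the property.

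For the base case, after line~\ref{algstep:partition} each file $F_m$ contains individual \kmers of $S$, and a single \kmer is trivially a sub-path (of length one) of the maximal simple path that contains it in the \dbg of $S$; every node lies on exactly one maximal simple path by the unique decomposition mentioned in Section~\ref{sec:dbgfm}. For the inductive step, suppose the invariant holds and we apply the compaction operation to an $m$-compactable pair $(u,v)$ in some $F_m$, producing $w = u \cdot v[k+1\ldots|v|]$. By the induction hypothesis, $u$ is a sub-path of some maximal simple path $P_u$ and $v$ is a sub-path of some maximal simple path $P_v$. The key point is that the last \kmer of $u$ and the first \kmer of $v$ are adjacent (since $u\rightarrow v$ along their overlapping boundary \kmers) and, because $(u,v)$ are \emph{compactable} in the relevant set, this boundary \kmer has in-degree and out-degree exactly $1$ along the relevant direction within $S$. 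I would use this to conclude that $P_u$ and $P_v$ must in fact be the \emph{same} maximal simple path $P$, and that $u$ and $v$ are consecutive sub-paths of $P$ sharing the overlap \kmer, so that their concatenation $w$ is again a contiguous sub-path of $P$.

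The main obstacle, and the step requiring the most care, is justifying that compactability \emph{within the file $F_m$} (or within the loaded set, which is what the algorithm actually tests) suffices to guarantee compactability \emph{within all of $S$}, which is what is needed to conclude that no branching occurs in the full \dbg and hence that $w$ stays on a single maximal simple path. Compactability is defined relative to a set $V\subseteq S$, and the algorithm only ever sees the \kmers currently loaded, not all of $S$. Here I would invoke Observation~\ref{obs:rminequalslmin} together with Lemma~\ref{lem:everykmerinfm}: if $u \rightarrow v$ with $\rmin(u)=\lmin(v)=m$, then by Lemma~\ref{lem:everykmerinfm} every \kmer of $S$ that could branch off the shared boundary (i.e.\ any $w'$ with $w'\rightarrow v$ or $v'$ with $u\rightarrow v'$) also has $m$ as its relevant minimizer, and so is present in $F_m$ at the time it is processed. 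Consequently testing compactability inside $F_m$ is equivalent to testing it inside $S$, which closes the gap and allows the single-path conclusion to go through.

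Finally I would remark that since moving a string to a file $F_{m'}$ with $m' > m$ (or to the output) copies the string verbatim, the invariant is trivially carried across these transfers, so the induction covers the entire execution and the lemma follows. Combined with Property~\ref{prop:kmersdontchange}, this sets up the eventual correctness argument in Theorem~\ref{thm:bcalmthm}, since it guarantees that the strings the algorithm manipulates never ``cross'' between distinct maximal simple paths.
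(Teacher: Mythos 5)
Your proposal is correct and follows essentially the same route as the paper's proof: an induction whose base case is that single $k$-mers are trivially sub-paths, and whose inductive step reduces to showing that $m$-compactability in $F_m$ implies compactability in $S$, using Observation~\ref{obs:rminequalslmin} to place any potential branching $k$-mer's minimizer at $m$ and Lemma~\ref{lem:everykmerinfm} to guarantee that $k$-mer is present as an extremity of some string in $F_m$, which precludes the compaction. The paper phrases this last step as a contrapositive/contradiction and you argue it directly, but the content is identical.
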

\begin{proof}
We say that a string is \emph{correct} if it corresponds to a sub-path of a maximal-length simple path. We prove the following invariant inductively: at the beginning of the loop at line~\ref{algstep:for-each-file}, all the files $F_m$ contain correct strings. The base case is trivially true as all files contain only $k$-mers in the beginning.
Assume that the invariant holds before processing $F_m$. 
It suffices to show that no wrong compactions are made; i.e. if two strings from $F_m$ are $m$-compactable, then they are also compactable in $S$. The contrapositive is proven. Assume, for the sake of obtaining a contradiction, that two strings $(u,v)$ are not compactable in $S$, yet are $m$-compactable in $F_m$ at the time it is processed.
Without loss of generality, assume that there exists $w\in S$ such that $u \rightarrow w$  and $w \ne v$. 
Since $u\rightarrow v$ and $u\rightarrow w$, $m = \rmin(u) = \lmin(v) = \lmin(w)$. Hence, by Lemma~\ref{lem:everykmerinfm}, $w$ is the left-most $k$-mer of a string in $F_m$ at the time $F_m$ is processed. This contradicts that $(u,v)$ are $m$-compactable in $F_m$ at the time it is processed. Thus, all compactions of strings in $F_m$ yield correct strings, and the invariant remains true after $F_m$ is processed. 
\end{proof}


\begin{restatable}{thm}{bcalmthm}
\label{thm:bcalmthm}
The output of Algorithm~\ref{alg:bcalm} is the set of maximal  simple paths of the de Bruijn graph of $S$.
\end{restatable}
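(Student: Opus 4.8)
The plan is to establish two things: that the algorithm outputs only maximal simple paths, and that every maximal simple path is output exactly once. The machinery developed in the preceding lemmas does most of the heavy lifting, so the proof should be primarily a matter of assembling those pieces correctly. By Lemma~\ref{lem:correctpath}, every string ever present in a file corresponds to a sub-path of a maximal simple path, so in particular every string that reaches the output statement on line~\ref{algstep:output} is a valid (sub-path of a) simple path. Combined with Property~\ref{prop:kmersdontchange}, which guarantees the multiset of \kmers is invariantly $S$ and nothing is lost, this will show that the collection of output strings partitions the \kmers of $S$ into simple paths. Thus I would first argue that every output string is a genuine simple path, and that the output strings are pairwise \kmer-disjoint and together cover $S$.

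The core of the argument is to show that each output string is in fact \emph{maximal}, i.e. it cannot be extended. Here I would rely on Lemma~\ref{lem:everykmerinfm}: the key insight is that a string $u$ is only output when both $\lmin(u) \le m$ and $\rmin(u) \le m$, where $m$ is the minimizer of the currently processed file. I would argue that when the file $F_m$ is processed, it contains \emph{every} \kmer whose left or right minimizer equals $m$ (by Lemma~\ref{lem:everykmerinfm}), so all compactions involving such \kmers that are legitimate in $S$ have by that point been carried out. The rule on lines~\ref{algstep:bmin}--\ref{algstep:movemin}, which postpones a string to the file of its smallest minimizer exceeding $m$, ensures that a string is only finalized once it has passed through the files of both of its boundary minimizers and exhausted all compaction opportunities. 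Consequently, if an output string could be extended by some \kmer $w$ in $S$, that extension would correspond to a compactable pair in the appropriate file, contradicting that the $m$-compaction was carried to completion (via the same contrapositive style used in Lemma~\ref{lem:correctpath}).

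Finally, I would verify that every maximal simple path is output and that the output contains no duplicates and no spurious extra paths. Disjointness and coverage follow from Property~\ref{prop:kmersdontchange} together with the observation that each output string is a maximal simple path: since the maximal simple paths of a \dbg form a unique edge-disjoint (here, \kmer-disjoint on internal structure) cover, and since the output strings are maximal simple paths covering all of $S$ without repeating \kmers, they must coincide exactly with $sp_k(S)$. I would also confirm termination: every file $F_m$ is eventually processed (the \texttt{for} loop on line~\ref{algstep:for-each-file} visits each minimizer), and a string written to $F_{m'}$ always has $m' > m$, so no string is written back to an already-processed file, guaranteeing progress.

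The main obstacle will be the maximality argument, specifically making precise the claim that ``all legitimate compactions have been performed by the time a string is output.'' The subtlety is that a maximal simple path may span several minimizer values, so its constituent \kmers are scattered across many files and only assembled incrementally as the files are processed in minimizer order; I must argue carefully, using the boundary-minimizer bookkeeping in Lemma~\ref{lem:everykmerinfm}, that no extension is ever missed --- that is, that when $u$ is output with $\max(\lmin(u),\rmin(u)) \le m$, any potential neighbor $w$ in $S$ that would extend the simple path has already been identified and merged (or shown not to exist) rather than still waiting in an unprocessed file. Handling the interaction between the two boundary minimizers, and the possibility that one endpoint's minimizer is strictly smaller than $m$ while the string was nonetheless only now completed, is where the delicate case analysis lies.
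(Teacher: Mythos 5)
Your proposal is correct and follows essentially the same route as the paper's proof: both hinge on the observation that if $u\rightarrow v$ are consecutive compactable \kmers of a maximal simple path, then $m=\rmin(u)=\lmin(v)$ and Lemma~\ref{lem:everykmerinfm} forces both to be extremities of strings in $F_m$ when it is processed, so they get compacted there; the paper merely phrases this as a contradiction on a maximal path being split into pieces $p_1,\ldots,p_i$, while you phrase it as maximality of each output string. The supporting ingredients (Property~\ref{prop:kmersdontchange} for coverage, Lemma~\ref{lem:correctpath} for soundness, Observation~\ref{obs:rminequalslmin} for the shared minimizer) are used identically.
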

\begin{proof}
By contradiction, assume that there exists a maximal-length simple path $p$ that is not returned by Algorithm~\ref{alg:bcalm}. 
Every input $k$-mer is returned by Algorithm~\ref{alg:bcalm} in some string, and by Lemma~\ref{lem:correctpath}, every returned string corresponds to a sub-path of a maximal-length simple path. Then, without loss of generality, assume that a simple path of $p$ is split into sub-paths $p_1,p_2,\ldots,p_i$ in the output of Algorithm~\ref{alg:bcalm}.  Let $u$ be the last $k$-mer of $p_1$ and $v$ be the first $k$-mer of $p_2$. Let $m=\rmin(u)= \lmin(v)$ (with Observation~\ref{obs:rminequalslmin}). By Lemma~\ref{lem:everykmerinfm}, $u$ and $v$ are both present in $F_m$ when it is processed. As $u$ and $v$ are compactable in $S$ (to form $p$), they are also compactable in $F_m$ and thus the strings that include $u$ and $v$ in $F_m$ are compacted at line~\ref{algstep:compact-graph}. This indicates that $u$ and $v$ cannot be extremities of $p_1$ and $p_2$, which yields a contradiction.
\end{proof}

\section{Results\label{sec:results}}
We tested the effectiveness of our algorithms on \journal{the de Bruijn graphs} of two sequencing datasets. 
Experiments in Tables~\ref{tbl:pipelineanalysis},~\ref{tbl:externalcomparison} and~\ref{tbl:ordering} were run on a single core of a desktop computer equipped with an Intel i7 3.2 GHz processor, 8 GB of memory and a 7200 RPM hard disk drive. \revision{Experiments in Tables~\ref{tbl:mdependence} and~\ref{tbl:abyssdbgfm} were run on a single core of a cluster node with 24 GB of memory and 2.67 GHz cores}.
In all experiments, at most 300 GB of temporary disk space was used. 
The first dataset is 36 million 155bp Illumina human chromosome 14 reads (2.9 GB compressed fastq) from the GAGE benchmark~\citep{gage}. The second dataset is 1.4 billion Illumina 100bp reads (54 GB compressed fastq) from the NA18507 human genome (SRX016231).
We first processed the reads with \kmer counting software, which is the first step of most assembly pipelines.
\revision{We used a value of $k = 55$ as we found it gives reasonably good results on both datasets.}
We used \DSK~\citep{dsk}, a software that is designed specifically for low memory usage 
and can also filter out low-count \kmers as they are likely due to sequencing errors
(we used $<5$ for chr14 and $<3$ for the whole genome). 

\arxiv{
\begin{table}
\begin{minipage}{\linewidth}
\setlength{\tabcolsep}{12pt}
\centering
\small
\begin{tabulary}{\linewidth}{@{}LLLL@{}}
\toprule
Dataset & \DSK &
\bcalm & 
\dbgfm 
 \\
\midrule
\multirow{2}{*}{Chromosome 14} & 
43 MB & 
19 MB & 
38 MB 
 \\
 & 
25 mins & 
15 mins &
7 mins  \\

\multirow{2}{*}{Whole human genome}& 
1.1 GB & 
43 MB &  
1.5 GB   
\\
 & 
5 h & 
12 h & 
7 h  \tabularnewline 

\bottomrule \\
\end{tabulary}
\end{minipage}
\caption{
Running times (wall-clock) and memory usage of \DSK, \bcalm and \dbgfm  construction on the human chromosome 14 and whole human genome datasets ($k=55$ and $\ell=10$ for both).
\label{tbl:pipelineanalysis}}
\end{table}
}

First, we ran \bcalm on the of \kmers computed by \DSK.
The output of \bcalm 
was then passed as input to \dbgfm, which constructed the FM-index.
Table~\ref{tbl:pipelineanalysis} shows the resulting time and memory usage of \DSK, \bcalm, and \dbgfm.
For the whole genome dataset, \bcalm used only 43 MB of memory to 
take a set of $2.5\cdot10^9$ 55-mers and 
output 40 million sequences of total length 4.6 Gbp. 
\dbgfm represented these paths in an index of size 1.5 GB, 
using no more than that memory during construction.
The overall construction time, including \DSK, was roughly 24 hours. \revision{In comparison, a subset of this dataset was used to construct the data structure of~\citet{cascading} in 30.7 hours.}

\arxiv{
\begin{table}
\begin{minipage}{\linewidth}
\setlength{\tabcolsep}{12pt}
\centering
\small
\begin{tabulary}{\linewidth}{@{}LLLL@{}}
\toprule
 & DBGFM & Salikhov~\emph{et al.} 
 &  Conway \& Bromage \\
\midrule
chr14 & 
38.0 MB  &  
94.9 MB &   
$>875$ MB \\

Full human dataset &  
1462 MB  &  
2702 MB  & 
$>22951$ MB    
\\
\bottomrule \\
\end{tabulary}
\end{minipage}
\caption{
Memory usage of de Bruijn graph data structures, on the human chromosome 14 and whole human genome datasets ($k=55$ for both).
We did not run the algorithm of Conway and Bromage because our machine does not have sufficient memory
for the whole genome.
Instead, we report the theoretical size of their data structure, assuming that it would be constructed from the output of \DSK.
As described in~\cite{conway_succinct_2011},
this gives a lower bound on how well their implementation could perform.
\label{tbl:externalcomparison}}
\end{table}
}

We compared the space utilization of our \dbgfm representation with 
that of other low space data structures, 
\citet{cascading} and ~\citet{conway_succinct_2011}
(Table~\ref{tbl:externalcomparison}).
Another promising approach is that of ~\citet{sadakane},
but they do not have an implementation available.
We use $3.53$ bits per $k$-mer (38.0 MB total) for chr14 and
4.76 bits per \kmer (1462 MB total) for the whole-genome.
This is a $60\%$ and $46\%$ improvement over the state-of-the art, respectively. 

\arxiv{
\begin{table}
\begin{minipage}{\linewidth}
\setlength{\tabcolsep}{12pt}
\centering
\small
\begin{tabulary}{\linewidth}{@{}LLLL@{}}
\toprule
 Ordering type  
 & Lexicographical
 & Uniformly Random 
 &   $l$-mer frequency  \\
\midrule 
Memory usage  & 804 MB  &  222 MB  & 19 MB   
\\
\bottomrule \\
\end{tabulary}
\end{minipage}
\caption{
Memory usage 
of \bcalm with three different minimizer orderings: lexicographical, uniformly random, 
and according to increasing \lmer frequencies.
The dataset used is the human chromosome 14 with $k=55$ and $l=8$.
\label{tbl:ordering}}
\end{table}
}

During algorithm development, 
we experimented with different ways to order the minimizers and the effect on memory usage
(Table~\ref{tbl:ordering}).
Initially, we used the lexicographical ordering, but experiments with the chromosome 14 dataset showed it was 
a poor choice, resulting in 804 MB memory usage with $l=8$.
The lexicographically smallest $l$-mer is $m_0=A^l$,
which is overly abundant in human chromosomes for $l\leq 10$, resulting in a large file $F_{m_0}$. 
In a second attempt, 
we created a uniformly random ordering of all the \lmers.
While $A^l$ is no longer likely to have a small value,
it is still likely that there is a highly repeated \lmer that comes early in the ordering,
resulting in 222 MB memory usage.
Finally, we ordered the \lmers according to their frequency in the dataset.
This gave a memory usage of 19 MB, resulting in 
a 40-fold improvement 
over the initial lexicographical ordering. The running times of all three orderings were comparable.
We also evaluated the effect that the minimizer size $\ell$ has on memory usage \revision{and running time} 
(Table~\ref{tbl:mdependence}). 
Larger $\ell$ will generally lead to smaller memory usage, however
we did not see much improvement past $l=8$ on this dataset.

\arxiv{
\begin{table}
\begin{minipage}{\linewidth}
\setlength{\tabcolsep}{12pt}
\centering
\small
\begin{tabulary}{\linewidth}{@{}LLLLLLLL@{}}
\toprule
 Minimizer size $l$  
 & \ 2
 & 4
 &   6 
 & 8
 &  10 
\\
\midrule
Memory usage  &
9879 MB &   
3413 MB  &   
248 MB  &   
19 MB  &  
19 MB    
\\
\revision{Running time} & 
\revision{27m19s} &  
\revision{22m2s} & 
\revision{20m5s} & 
\revision{18m39s} & 
\revision{21m4s} 
\\
\bottomrule \\
\end{tabulary}
\end{minipage}
\caption{
Memory usage \revision{and wall-clock running time} of \bcalm with increasing values of minimizer sizes $l$ on the chr14 data.
By grouping files into virtual files, these values of $l$ require respectively 4, 16, 64, 256 and 1024 physical files on disk. The ordering of minimizers used is the one based on $l$-mer counts.
\label{tbl:mdependence}}
\end{table}}

\arxiv{
\begin{table}
\begin{minipage}{\linewidth}
\setlength{\tymax}{0.5\linewidth}
\centering
\small
\begin{tabulary}{\textwidth}{@{}LLLLLLL@{}}
\toprule
Data structure & Memory usage & Bytes\slash \kmer & dBG (B\slash \kmer) & Data (B\slash \kmer) & Overhead (B\slash \kmer) & Run time\\
\midrule
sparsehash & 2429 MB & 29.50 & 16   & 8 & 5.50 & 
\revision{14m4s} \\ 
\dbgfm      & 739 MB  & 8.98  & 0.53 & 8 & 0.44 & 
\revision{21m1s} \\ 
\bottomrule \\
\end{tabulary}
\end{minipage}
\caption{Memory usage and run time (wall clock) of the ABySS hash table implementation (sparsehash) and of the \dbgfm implementation, 
using a single thread to process the human chromosome 14 data
set. 
The dBG bytes/$k$-mer column corresponds to the space taken by encoded $k$-mers for  sparsehash, and the FM-index for \dbgfm. The Data bytes/$k$-mer column corresponds to associated data. The Overhead bytes/$k$-mer corresponds to the hash table and heap overheads, as well as the rank/select bit array.
The run time of the \dbgfm row does not
include the time to construct the \dbgfm representation.}
\label{tbl:abyssdbgfm}
\end{table}
}

Finally, we evaluated the performance of ABySS using \dbgfm compared with that of the 
hash table implementation (Table~\ref{tbl:abyssdbgfm}). \revision{Note, however, that only the graph traversal and marking steps were implemented in the \dbgfm version, and none of the graph simplifications steps of ABySS.}
The \dbgfm version used $70\%$ less memory, albeit 
the hash version was \revision{$33\%$} faster, 
indicating the time/space trade-off inherent in the FM-index. 
In addition to storing the graph,
ABySS associates data with each \kmer: 
the count of each \kmer and its reverse complement (two 16 bits counters),
the presence or absence of the four possible in- and out-edges (8 bits), 
three boolean flags indicating whether the \kmer and its reverse complement
have been visited in graph traversal (2 bits), and 
whether they have been removed (1 bit).
While in the hash implementation, the graph structure takes \revision{54}\% of the memory,
in the \dbgfm version it only used \revision{6}\% of memory.
This indicates that further memory improvements can be made by optimizing 
the memory usage of the associated data.


\section{Conclusion}
This paper has focused on pushing the boundaries 
of memory efficiency of de Bruijn graphs. 
Because of the speed/memory trade-offs involved, 
this has come at the cost of slower data structure construction and query times.
Our next focus will be on improving these runtimes through optimization and parallelization of our algorithms.

We see several benefits of low-memory de Bruijn graph data structures in genome assembly.
First, there are genomes like the 20 Gbp white spruce which are an order of 
magnitude longer than the human which cannot be assembled by most assemblers,
even on machines with a terabyte of RAM.
Second, even for human sized genomes, the memory burden poses 
unnecessary costs to research biology labs.
Finally, in assemblers such as ABySS that store the \kmers explicitly,
memory constraints can prevent the use of large $k$ values.
With \dbgfm, the memory usage becomes much less dependent on $k$,
and allows the use of larger $k$ values 
to improve the quality of the assembly.

Beyond genome assembly, our work is also relevant to many \emph{de novo} sequencing applications where large de Bruijn graphs are used, e.g. assembly of transcriptomes and meta-genomes~\citep{trinity,raymeta}, and \emph{de novo} genotyping~\citep{cortex}.





\subsubsection*{Acknowledgements}

The authors would like to acknowledge anonymous referees from a previous submission for their helpful suggestions and for pointing us to the paper of \citet{gagie2012}.
The work was partially supported by NSF awards
    DBI-1356529, IIS-1421908, and CCF-1439057 to PM.

\bibliographystyle{jcompbiol}
\bibliography{main}

\makeatletter
\@fpsep\textheight
\makeatother

\end{document}